\documentclass[english,aps,showpacs,nobibnotes,superscriptaddress,floatfix, onecolumn,notitlepage,prl,longbibliography, reprint]{revtex4-2}
\usepackage[T1]{fontenc}
\usepackage[latin9]{inputenc}
\usepackage{color}
\usepackage{mathrsfs}
\usepackage{bm}
\usepackage{amsmath}
\usepackage{amsthm}
\usepackage{amssymb}
\usepackage{graphicx}

\makeatletter
\usepackage{braket}
\usepackage{bm}
\usepackage{graphicx}
\usepackage[usenames,dvipsnames]{xcolor}
\usepackage[colorlinks=true,citecolor=Blue,linkcolor=RubineRed,urlcolor=Blue]{hyperref}

\date{\today}

\makeatother

\providecommand\lemmaname{Lemma}
\theoremstyle{plain}
\newtheorem{lem}{\protect\lemmaname}
\usepackage{babel}
\begin{document}
\title{Probe Signatures of Quantum Chaos with Quantum Sensing}
\author{Isa\'ias Vallejo-Fabila}
\address{Institute of Fundamental and Transdisciplinary Research, Institute
of Quantum Sensing and Institute for Advanced Studies in Physics,
Zhejiang University, Hangzhou 310027, China}
\author{Jing Yang}
\email{jing.yang.quantum@zju.edu.cn}

\address{Institute of Fundamental and Transdisciplinary Research, Institute
of Quantum Sensing and Institute for Advanced Studies in Physics,
Zhejiang University, Hangzhou 310027, China}
\address{Nordita, KTH Royal Institute of Technology and Stockholm University,
Hannes Alfv\'ens vag 12, 106 91 Stockholm, Sweden}
\begin{abstract}
The spectral form factor (SFF), which is the Fourier transform of
the two-point spectral correlation, is a fundamental diagnostics of
quantum chaos. Nevertheless, its experimental detection in quantum
many-body systems was less unexplored until recently. Here, we establish
a relationship between the SFF with the quantum Fisher information
(QFI) by means of quantum sensing protocols. Furthermore, when the
eigenvalues and eigenstates of the disordered Hamiltonian are statistically
independent, we prove that QFI is typical both in short and long timescales
while numerics suggest that it is also typical at intermediate timescales,
which ensures self-averaging of QFI. The self-averaging property renders
the experimental measurement of QFI with our protocol less noisy in
contrast with the direct measurement of SFF which is known to be non-self-averaging.
Our results reveal an intriguing connection between quantum noise
and the signatures of quantum chaos.
\end{abstract}
\maketitle
\textit{Introduction}.--- Quantum sensing, also known as quantum
metrology, is the subject of exploiting quantum resources such as
coherence and entanglement to improve the precision of unknown parameters~\citep{pezz`e2018quantum,braun2018quantumenhanced,degen2017quantum}.
The subject plays an important role in detecting the new fundamental
physics~\citep{ye2024essayquantum}. Recent advances in quantum sensing
has developed to leverage the many-body interactions between quantum
sensors to further improve the sensitivity~\citep{montenegro2025reviewquantum}.
In many-body quantum sensing, many efforts have been spent to explore
the fundamental and universal precision limit quantified by QFI in
various many-body sensing protocols~\citep{boixo2008quantum,boixo2007generalized,beau2017nonlinear,yang2022variational,chu2023strongquantum,he2023starklocalization,abiuso2025fundamental,puig2025fromdynamical,block2024scalable}.
In particular, it is found that by properly engineering quantum chaos
due to many-body interaction, it can further boost the precision of
sensing unknown signals~\citep{fiderer2018quantum,li2023improving}.

At the fundamental level, QFI can witness multi-partite entanglement
in many-body quantum systems~\citep{pezze2009entanglement,hyllus2012fisherinformation,toth2012multipartite}
and has been shown to be related to the dynamical susceptibility through
the linear response theory~\citep{hauke2016measuring}, which has
inspired lots of efforts~\citep{pezz`e2017multipartite,zhang2018characterization,brenes2020multipartite}.
Here, we reveal another fundamental relation between QFI in quantum
sensing protocols and SFF, the diagnostics of quantum chaos. Given
a quantum many-body system, detecting its chaotic nature has recently
spurred several initial efforts~\citep{vallejo-fabila2025singlesite,das2025proposal,dong2025measuring,daug2023manybody}.
The relation between QFI and SFF we discover here, on one hand, sets
the fundamental and universal precision limit in quantum sensing with
quantum chaotic dynamics, on the other hand, it allows us probe quantum
chaos signatures with quantum sensing methods. We also show both analytically
and numerically that unlike SFF~\citep{prange1997thespectral} or
the survival probability~\citep{schiulaz2019thouless}, the QFI is
self-averaging, which render its experimental detection less noisy.
In contrast with the case of quantum sensing where the signals are
unknown, we propose to couple an unknown many-body Hamiltonian with
\textit{known} signals and experimentally measure QFI through proper
measurement techniques. From the foundational perspective, since the
precision of quantum sensing is dominated by quantum noise, our results
reveal the fundamental connection between quantum noise and quantum
chaos signatures. 

\textit{Setup}.--- From an experimental perspective, we would like
to probe the chaotic nature of a many-body quantum system, clean or
disordered, at a given time $T$. Specifically, we would like to know
whether the spectrum of the system Hamiltonian is integrable, satisfying
Poisson statistics or chaotic satisfying Wigner-Dyson level statistics,
predicted by the Gaussian ensembles of random matrix theory (RMT)~\citep{haake2001quantum,mehta2004randommatrices}.
However, in experimental platforms such as cold atoms and ion traps,
one cannot directly access to the energy spectrum~\citep{Schreiber2015,Hess2017}.
Here, inspired by the spirit of quantum metrology, we consider to
apply a sensing Hamiltonian $\Pi_{\bm{\theta}}=\sum_{\alpha}\theta_{\alpha}P_{\alpha}$
for a time $T$, the coupling parameter $\theta_{\alpha}$ is known
and can be carefully tuned, and the coupling parameters $P_{\alpha}$
can be engineered. Upon time evolution, we expect quantum chaos signatures
to be encoded in the parameter-dependent states, which then can be
further probed through quantum measurements. \textcolor{black}{Two
protocols naturally arise: the continuous protocol, where system and
sensing Hamiltonians are applied at the same time, and }the quench
protocol, where the system Hamiltonian is applied for a time $T$
followed by the sensing Hamiltonian applied for the same time $T$.
From the perspective of quantum sensing, the system Hamiltonian $H$
can be viewed as a many-body control to create necessary entanglement
for sensing. 
\begin{figure}
\centering
\includegraphics[scale=0.125]{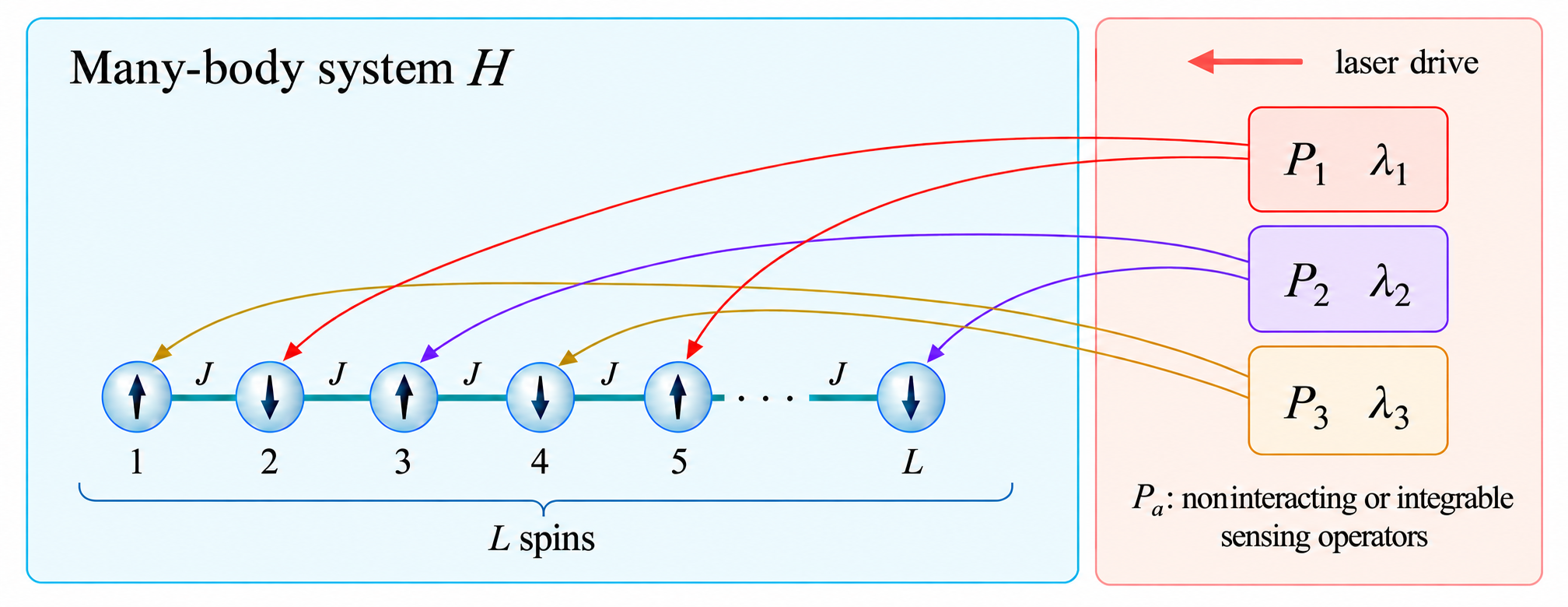}

\caption{The setup of probing a many-body quantum system with a sensing Hamiltonian
$\Pi_{\bm{\theta}}=\sum_{\alpha}\theta_{\alpha}P_{\alpha}$, where
signals $\bm{\theta}$ are known. The initial state is chosen from
a Haar random ensemble. The QFI tomography is performed at the single-realization
level. The ensemble average of QFI gives rise to the time integral
of SFF, whose fluctuations are suppressed in the large-system limit. }
\end{figure}
The continuous protocol allows entanglement creation and sensing are
performed concurrently while the quench protocol exploits the control
to create entanglement first and perform the sensing tasks. For $t\in[0,T]$,
the\textcolor{black}{{} total Hamiltonian for the continuous protocol
and }quench protocol\textcolor{black}{{} are}
\begin{eqnarray}
H_{\bm{\theta}} & = & H+\Pi_{\bm{\theta}},
\end{eqnarray}
and
\begin{equation}
H_{\bm{\theta}}=[1-\Theta(t-T)]H+\Theta(t-T)\Pi_{\bm{\theta}},
\end{equation}
respectively, where $\Theta(t)$ is the Heaviside function. In addition,
the geometric tensor (GT) for both protocols, with given initial state
and many-body Hamiltonian $H$ is
\begin{equation}
Q^{\mathrm{ctn}}_{\alpha\beta}\equiv4\mathrm{Cov}^{>}[G^{\mathrm{ctn}}_{\alpha},\,G^{\mathrm{ctn}}_{\beta}]{}_{\ket{\psi_{0}}},Q^{\mathrm{qch}}_{\alpha\beta}\equiv4\mathrm{Cov}^{>}[G^{\mathrm{qch}}_{\alpha},\,G^{\mathrm{qch}}_{\beta}]{}_{\ket{\psi_{0}}},\label{eq:Q-def}
\end{equation}
where the metrological generators are defined as $G^{\mathrm{ctn}}_{\alpha}\!\equiv\!\int^{T}_{0}e^{\mathrm{i}H_{\bm{\theta}}s}P_{\alpha}\mathrm{e}^{-\mathrm{i}H_{\bm{\theta}}s}ds,$
$G^{\mathrm{qch}}_{\alpha}\!\equiv\!e^{\mathrm{i}HT}\!\left(\int^{T}_{0}ds\mathrm{e}^{\mathrm{i}H_{\bm{\theta}}s}P_{\alpha}\mathrm{e}^{-\mathrm{i}H_{\bm{\theta}}s}\right)\!e^{-\mathrm{i}HT}$
~\citep{boixo2007generalized}, and the non-symmetric covariance
between two operators over a state is $\mathrm{Cov}^{>}(A,B)\!\equiv\!\braket{\psi_{0}|AB|\psi_{0}}\!-\!\braket{\psi_{0}|A|\psi_{0}}\braket{\psi_{0}|B|\psi_{0}}.$
The real part of the GT is known as the QFI while the imaginary part
is known as the Berry curvature~\citep{kolodrubetz2017geometry,carollo2020geometry,provost1980riemannian}.
In the small-signal limit $\bm{\theta}\to0$ the generators for both
protocols simplify,
\begin{equation}
G^{\mathrm{ctn}}_{\alpha}\approx\int^{T}_{0}ds\mathrm{e}^{\mathrm{i}Hs}P_{\alpha}\mathrm{e}^{-\mathrm{i}Hs},\,G^{\mathrm{qch}}_{\alpha}\approx T\mathrm{e}^{\mathrm{i}HT}P_{\alpha}\mathrm{e}^{-\mathrm{i}HT}.
\end{equation}

\textit{Random matrix prediction from isospectral ensemble.} ---
Our goal is based on the average of the GT for Haar-random initial
states and the Hamiltonian ensemble (the second is trivial for clean
Hamiltonians) and connect them with the SFF, a key quantity in the
witness of quantum chaos. In this section, we consider the Hamiltonian
ensemble as the isospectral ensemble where all the eigenstates are
Haar-random and the eigenvalues are identical and follows the same
probability distribution. We denote the isospectral ensemble as $\mathrm{ISO}\!=\!\{H=U\Lambda U^{\dagger}|U\sim\mathrm{Haar},\,\Lambda=\mathrm{diag}(\lambda_{1},\lambda_{2},\cdots,\lambda_{N})\sim d\mu(\Lambda)\},$where
$d\mu(\Lambda)$ is the probability measure of the eigenvalues. The
Gaussian Unitary Ensemble (GUE) is a special case of the isospectral
ensemble~\citep{mehta2004randommatrices} with $d\mu(\Lambda)=\mathrm{e}^{-\sum^{N}_{i=1}\lambda^{2}_{i}}\prod_{1\leq j<k\leq N}|\lambda_{j}-\lambda_{k}|^{2}\prod^{N}_{\ell=1}d\lambda_{\ell}$.
Before we give our main results we introduce the notation. It is clear
from Eq.~(\ref{eq:Q-def}) that $Q_{\alpha\beta}$ is a function
of $H$ and $\ket{\psi_{0}}$. The average of the GT over the initial
Haar-random states is $\overline{Q}_{\alpha\beta}\equiv\mathbb{E}_{\ket{\psi_{0}}\sim\mathrm{Haar}}[Q_{\alpha\beta}(\ket{\psi_{0}})]$
and its average over the Hamiltonian ensemble is $\langle\!\langle Q_{\alpha\beta}\rangle\!\rangle\equiv\mathbb{E}_{H\sim\mathrm{GUE}}[Q_{\alpha\beta}(H)]$.
Here, the double angle bracket indicates that the average over the
isospectral Hamiltonian ensemble consisting of independent averages
over the eigenstates and eigenvalues. Clearly, $\langle\!\langle\overline{Q}_{\alpha\beta}\rangle\!\rangle$
denotes the average over both, initial Haar-random states and isospectral
Hamiltonian ensemble. Furthermore, averages over the eigenvalue distribution
and the Haar-random distribution in the Hamiltonian ensemble are denoted
as $\langle\bullet\rangle_{\Lambda}$ and $\langle\bullet\rangle_{U}$,
respectively. The average of the isospectral ensemble for a given
initial state can be obtained via the Schur-Weyl duality~\citep{harrow2005applications,goodman2009symmetry,roberts2017chaosand}
and yields both the two-point and four-point SFF~\citep{liu2018spectral,oliviero2021randommatrix,cotler2020spectral},
see Sec.~\ref{sec:Isospectral-twirling} in Supplemental Material~\citep{SM}. 

It turns out that first averaging over the initial Haar-random states
and then over the isospectral ensemble simplifies the results (Sec.~\ref{sec:QFI-closed-form}
in Supplemental Material~\citep{SM}). For the quench protocol, the
average of the GT over the ensemble of Haar-random initial states
is
\begin{equation}
\overline{Q}^{\mathrm{qch}}_{\alpha\beta}(T)=4C_{\alpha\beta}T^{2},\label{eq:Q-qch}
\end{equation}
independent of the system Hamiltonian, while for the continuous protocol,
the average of the GT over the ensemble of Haar-random initial states
and the isospectral Hamiltonian becomes
\begin{align}
\langle\!\langle\overline{Q}^{\mathrm{ctn}}_{\alpha\beta}(T)\rangle\!\rangle & =\frac{4C_{\alpha\beta}}{N^{2}-1}[2N^{2}\int^{T}_{0}(T-t)K(t)dt-T^{2}]\nonumber \\
 & =\frac{4C_{\alpha\beta}T^{2}}{N^{2}-1}\left[2N^{2}I(T)-1\right],\label{eq:Q-ctn}
\end{align}
where $N$ is the dimension of the many-body Hilbert space, the constant
$C_{\alpha\beta}\equiv[N\mathrm{Tr}(P_{\alpha}P_{\beta})-\mathrm{Tr}P_{\alpha}\mathrm{Tr}P_{\beta}]/[N(N+1)],$
the average SFF is defined as
\begin{equation}
K(t)=\frac{1}{N^{2}}\langle\sum_{kl}\mathrm{e}^{\mathrm{i}(\lambda_{k}-\lambda_{l})t}\rangle_{\Lambda}=\frac{1}{N^{2}}\langle\!\langle|\mathrm{Tr}\mathrm{e}^{\mathrm{i}Ht}|^{2}\rangle\!\rangle,
\end{equation}
with $K(0)=1$, and the time-integrated SFF is defined as
\begin{equation}
I(T)\equiv\int^{1}_{0}(1-u)K(uT)du.
\end{equation}
A few observations can be made: (i) It is clear that the imaginary
part of the GT, i.e, the Berry curvature is washed out upon performing
the average over the ensemble of Haar-random initial states and therefore
the average GT reduces to the average QFI. (ii) The quench protocol
cannot detect the integrable or chaotic nature of a many-body quantum
system, the average of QFI over the ensemble of Haar-random initial
states does not depends on the Hamiltonian ensemble as indicated by
Eq.~(\ref{eq:Q-qch}). (iii) When the isospectral Hamiltonian ensemble
is GUE we obtain a closed-form expression for the time-integrated
SFF $I(T)$, $\lim_{T\to\infty}{\displaystyle I(T)}={\displaystyle 1/(2N)}$
and $\lim_{T\to0}{\displaystyle I(T)}={\displaystyle 1/2}$ (see Supplemental
Material~\citep{SM}). The comparison between Eq.~(\ref{eq:Q-ctn})
and numerical calculation is shown in Fig.~\ref{fig:Numerics}(b).

\textit{Universal behavior for realistic Hamiltonian ensembles}. ---
For the continuous protocol, Eq.~(\ref{eq:Q-ctn}) can be computed
analytically in the context of RMT by solving the integral of the
SFF~\citep{Cotler2017}, but for realistic Hamiltonians or Hamiltonian
ensembles it is not easy. If a Hamiltonian is chaotic, only its two-point
spectral correlations are predicted by RMT according to the Bohigas-Giannoni-Schmit
conjecture~\citep{bohigas1984characterization,guhr1998randommatrix}.
The density of states of a local chaotic Hamiltonian near the middle
of the spectrum is Gaussian~\citep{keating2015spectra,brody1981randommatrix}
while the density of states for Gaussian random matrix ensemble is
the celebrated Wigner semi-circle~\citep{mehta2004randommatrices}.
This also leads to the fact that the eigenstates clustered around
the middle of the spectrum of a chaotic Hamiltonian are Haar random
as predicted by RMT. The description of a generic chaotic eigenstate
is given by the random wave model~\citep{berry1977regular,santos2012chaosand,schoeppl2025universal}
and the eigenstate thermalization hypothesis (ETH)~\citep{srednicki1999theapproach,deutsch1991quantum,srednicki1994chaosand,dalessio2016fromquantum}. 

To get around the issue of non-Haar-randomness of the realistic chaotic
ensembles, let us note a simple observation: in the quench protocol,
the state at time $T$ is always a Haar-random state upon randomizing
the initial state according to the Haar ensemble, regardless of the
Hamiltonian ensemble. 
\begin{equation}
d\mu(\ket{\psi(T)})=d\mu(W(T))d\mu_{\mathrm{Haar}}(V)\ket{\phi_{0}}=d\mu(\ket{\psi_{0}}),
\end{equation}
where $W(T)=\mathrm{e}^{-\mathrm{i}HT}$. Inspired by this observation,
we propose a \textit{randomized coupling protocol}: we perform the
twirling operation on the coupling operators, $\{P_{\alpha}\}$, which
are randomized according to $V\{P_{\alpha}\}V^{\dagger}$, with $V$
being Haar-random unitary on the whole Hilbert space. We denote the
QFI averaged over Haar-randomized coupling operators by $\widetilde{Q}_{\alpha\beta}\equiv\mathbb{E}_{V\sim\mathrm{Haar}}[Q_{\alpha\beta}(V\{P_{\alpha}\}V^{\dagger})]$.
It can be shown that the operation of twirling the coupling effectively
turn all the eigenstates of realistic Hamiltonian ensemble to be Haar
random such that (details are in the Supplemental Material~\citep{SM}):
\begin{equation}
\langle\!\langle\widetilde{\overline{Q^{\text{ctn}}_{\alpha\beta}}}(T)\rangle\!\rangle=\langle\!\langle\overline{Q^{\text{ctn}}_{\alpha\beta}}(T)\rangle\!\rangle.\label{eq:Q-ctn-tilde}
\end{equation}
The twirling technique has been widely used in quantum information
theory and, more recently, in quantum metrology~\citep{imai2026referenceframeindependent,xu2025informationally}.
Here we exploit the twirling operation to establish the link between
QFI, the core notion in quantum sensing, and the SFF, a fundamental
quantity in quantum chaos. 

It is then natural to ask whether a similar relation between the averaged
QFI and the SFF exists for deterministic couplings. Here, we assume
the permutation invariance of the distribution of the eigenvalues
and statistical independence of eigenstates and eigenvalues~\citep{schiulaz2019thouless}.
Under these assumptions, we can separate the non-universal part that
depends on the distribution of the eigenstates in the averaged QFI
and to obtain\textcolor{red}{{} }
\begin{equation}
\langle\!\langle\overline{Q}^{\mathrm{ctn}}_{\alpha\beta}(T)\rangle\!\rangle\!=\!\frac{4}{N^{2}-1}[2N^{2}A_{\alpha\beta}\int^{T}_{0}(T-t)K(t)dt\!-\!B_{\alpha\beta}T^{2}],\label{eq:Q-ctn-phy-H}
\end{equation}
where $A_{\alpha\beta}$ and $B_{\alpha\beta}$ are non-universal
constants that depend on the microscopic eigenstate distributions
of the realistic Hamiltonian ensemble and both reduce to $C_{\alpha\beta}$
when the distribution of all eigenstates becomes Haar-random. We conclude
by noting that the connection between averaged QFI and SFF straightforwardly
generalize to the Haar average over orthogonal and symplectic groups\textcolor{red}{~}\citep{collins2006integration,collins2009onsome,collins2022theweingarten,brauer1937onalgebras}
(see Supplemental Material~\citep{SM}).

\textit{Typicality and self-averaging of QFI.}--- For an initial
state drawn from the Haar-random ensemble, due to the concentration
of Haar measure, initial-state typicality is satisfied as the dimension
of the Hilbert space $N$ goes to infinity. Same observation holds
for the typicality of the Haar-randomized coupling operators in the
continuous randomized protocol discussed previously. In this section
we focus on typicality and self-averaging with respect to the Hamiltonian
ensemble. We consider the SFF and its time-integral at the single-realization
level,
\begin{equation}
K_{H}(T)\equiv\frac{1}{N^{2}}|\mathrm{Tr}(\mathrm{e}^{-\mathrm{i}HT})|^{2},\,I_{H}(T)\equiv\int^{1}_{0}(1-u)K_{H}(uT)du,\label{eq:K-I-def}
\end{equation}
where by definition $K(T)=\langle\!\langle K_{H}(T)\rangle\!\rangle$
and $I(T)=\langle\!\langle I_{H}(T)\rangle\!\rangle$. Clearly, the
fluctuations of $Q^{\mathrm{ctn}}_{\alpha\beta}(T)/T^{2}$ over the
Hamiltonian ensemble are governed by the fluctuations of $I_{H}(T)$.
We define the relative variance (RV) for the SFF and the time-integrated
SFF over the Hamiltonian ensemble as $\mathrm{RV}_{N}[K_{H}(T)]=\langle\!\langle I^{2}_{H}(T)]\rangle\!\rangle/I^{2}(T)-1$,
and $\mathrm{RV}_{N}[I_{H}(T)]=\langle\!\langle I^{2}_{H}(T)]\rangle\!\rangle/I^{2}(T)-1$,
respectively. We emphasize that the subtle point that typicality and
self-averaging of a time-dependent quantity, like SFF or its time-integral,
are related but different notions. Self-averaging would require typicality
at all timescales, including infinite or asymptotically large. For
example, it is well known that the SFF for closed systems is not self-averaging~\citep{prange1997thespectral,cipolloni2023onthe}
because after the Heisenberg time $T\gtrsim N$, $\mathrm{RV}_{N}[K_{H}(T)]\sim1$.
Nevertheless, one can show that the SFF is typical at early times.
We also prove that when eigenstates and eigenvalues of the disordered
Hamiltonian are strictly statistically independent, after the Heisenberg
time $T\gg N$, $\mathrm{RV}_{N}[I_{H}(T)]\to0$ as $N\to\infty$
and $I_{H}(T)$ is typical at early times (see Supplemental Material~\citep{SM}).
Note that for the continuous-deterministic protocol, the property
of statistical independence is just an assumption, although it is
able to predict the average QFI accurately as one can see Fig.~\ref{fig:Numerics}(c).
However, whether this assumption can be employed to predict RV is
unclear because numerical confirmation in the thermodynamic limit
is beyond current numerical resources and capabilities (see Sec.~\ref{sec:Numerical-Calculation}
in Supplemental Material~\citep{SM}). 

The lack of self-averaging of SFF implies large fluctuations on its
time evolution and to obtain reliable measurements in experiments
is a challenge, even if all the technological barriers can be overcame.
Here, self-averaging and typicality of QFI allows us to measure the
SFF indirectly to a very high precision. Operationally, one can choose
a set of coupling operators $\{P_{\alpha}\}$ and a set of parameters
$\{\theta_{\alpha}\}$, which can be \textit{known a priori}. Then,
one can perform the ``QFI tomography'' at different timescales to
extract the SFF from the second time derivatives of the averaged QFI
through the following relation
\begin{equation}
d^{2}\langle\!\langle\widetilde{\overline{Q}}^{\mathrm{ctn}}_{\alpha\beta}\rangle\!\rangle/dT^{2}=\frac{8C_{\alpha\beta}}{N^{2}-1}[N^{2}K(T)-1],\,
\end{equation}
\begin{equation}
d^{2}\langle\!\langle\overline{Q}^{\mathrm{ctn}}_{\alpha\beta}\rangle\!\rangle/dT^{2}=\frac{8}{N^{2}-1}[N^{2}A_{\alpha\beta}K(T)-B_{\alpha\beta}].
\end{equation}
Such tomography of the QFI can be implemented in several ways. For
example, one can prescribe a set of \textit{known} weak signals $\bm{\theta}$.
The quantum Fisher information matrix at time $T$ can be measured
in the eigenbasis of the symmetric logarithmic derivative (SLD) $L_{i}(T)=2(\ket{\partial_{i}\psi_{\bm{\theta}}(T)}\bra{\psi_{\bm{\theta}}(T)}+\mathrm{h.c.})$
and then perform the classical parameter estimation with the maximum
likelihood estimator. Then if the number of repetitions of the experiments
$n$ is very large, $[\mathrm{Cov}(\hat{\bm{\theta}}^{\mathrm{est}})]^{-1}/n$
is the QFI with the $\mathrm{Cov}(\hat{\bm{\theta}}^{\mathrm{est}})$
is the covariance matrix of estimating the parameters $\bm{\theta}$.
In many-body systems, the eigenbasis of the SLD can be nonlocal, which
is challenging but still feasible with the current technology. One
can use local measurements with classical communications to reach
the quantum Cram\'er-Rao bound~\citep{zhou2020saturating,liu2025optimal}.
Recent techniques have been also developed to estimate the QFI directly~\citep{zhang2026gapclosing,zhang2025krylovshadow}.

\textit{Examples.}--- We consider three models, random matrices from
GUE and two spin$-1/2$ systems with open boundary conditions, the
disordered full-random-field XXX (FRF-XXX) $H_{\mathrm{FRF-XXX}}=J\sum^{L-1}_{i=1}\bm{S}_{i}\cdot\bm{S}_{i+1}+\sum_{i=1}\bm{h}_{i}\cdot\bm{S}_{i}$,
where $\bm{S}_{i}=(\sigma^{x}_{i},\,\sigma^{y}_{i},\,\sigma^{z}_{i})/2$
and $\bm{h}_{i}=(h^{x}_{i},\,h^{y}_{i},\,h^{z}_{i})$ with $h^{\alpha}_{i}\in[-h,h]$,
and the clean mixed-field Ising (MFI) $H_{\mathrm{MFI}}=J\sum^{L-1}_{i=1}S^{x}_{i}S^{x}_{i+1}+g^{x}\sum^{L-1}_{i=1}S^{x}_{i}+g^{z}\sum^{L-1}_{i=1}S^{z}_{i}+d_{L/2}S^{z}_{L/2}$\footnote{The term $d_{L/2}S^{z}_{L/2}$ is added to break the total $S^{z}=\sum^{L}_{i=1}S^{z}_{i}$
conservation for the convenience of the numerical calculation of the
level statistics}. Numerical results are shown in Fig.~\ref{fig:Numerics}, which
validates our findings between the relation of the average QFI and
the SFF in the continuous protocol. Furthermore, as shown in the insets
of Fig.~\ref{fig:Numerics}(b) and~\ref{fig:Numerics}(c), numerical
results suggest that the fluctuations of the time-integrated SFF are
not only negligible at short and long times, but also at intermediate
times, suggesting that the QFI is self-averaging for RMT and for realistic
chaotic Hamiltonians at all timescales. Note that although the correlation
hole, which is the characteristics of quantum chaos, disappears in
QFI due to time integration, one can always recover it upon reconstructing
the SFF. 
\begin{figure}
\centering
\includegraphics[scale=0.31]{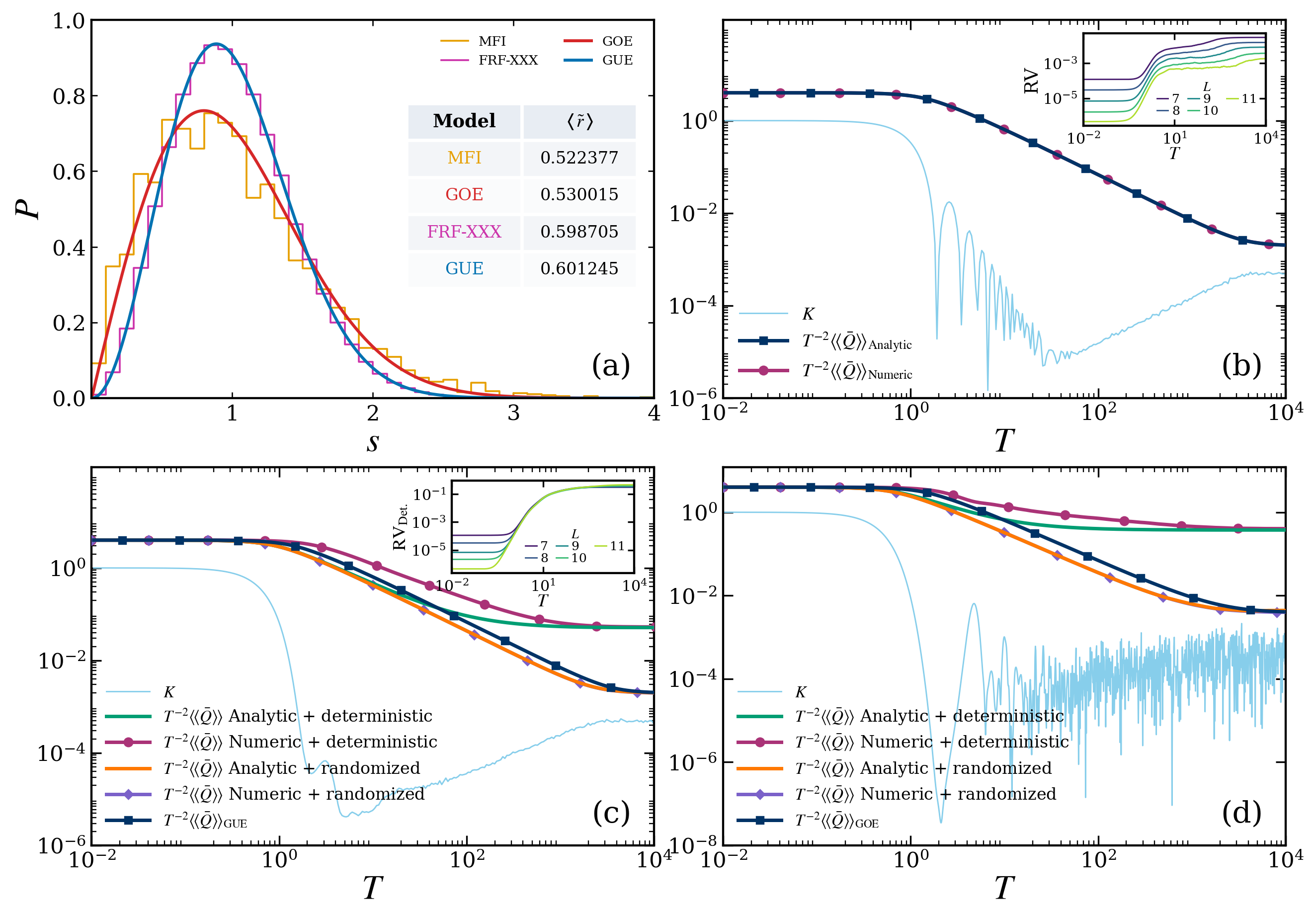}\caption{\label{fig:Numerics}\textcolor{blue}{{} }We consider a single-parameter
estimation problem with $\Pi_{\theta}=\theta P$, $P=\sigma_{x}\otimes\mathbb{I}^{\otimes(L-1)}$
and $N=2^{L}$. (a) Numerical calculations of the mean-level spacing
distribution and the mean adjacent gap ratio ensure that the FRF-XXX
and the MFI models are in the chaotic regime, satisfying GUE- and
GOE-like level statistics, respectively. (b) For GUE, the analytical
results, dark blue with squares based on Eq.~(\ref{eq:Q-ctn}), agrees
with numerics, dark pink with circles computed through the variance
of the generator. (c) For the disordered FRF-XXX model, two subprotocols
are considered: the deterministic protocol based on Eq.~(\ref{eq:Q-ctn-tilde})
and the randomized protocol based on Eq.~(\ref{eq:Q-ctn-phy-H}).
(d) For the clean MFI model, analytics for the two subprotocols are
based on Sec.~\ref{sec:Orthogonal-Groups} of ~\citep{SM}. The
insets in (b) and (c) is the relative variance of the QFI for the
GUE and deterministic protocol for the FRF-XXX, respectively. The
relative variance in the randomized protocol displays similar behavior
with the inset in (b). The parameter values are: $\theta=10^{-4}$;
for FRF-XXX model $J=1$ and $h^{\alpha}_{i}\in[-0.5,0.5]$; and for
the MFI model $J=1$, $g^{x}=0.8$, $g^{z}=0.9$, and $d_{L/2}=0.1$.
For (a), the Hilbert space dimension is $N=2^{12}$ with $50$ realizations,
while for (b-d) it is $N=2^{11}$ together with an average over $500$
realizations. Further details of the analysis of this figure and the
numerical implementation are presented in Sec.\ref{sec:Numerical-Calculation}
of the Supplemental material~\citep{SM}.}
\end{figure}

\textit{Discussions and Conclusion.}--- For qubit systems where $N=2^{L}$,
$P_{\alpha}$, $P_{\beta}$ are Pauli strings satisfying $\mathrm{Tr}P_{\alpha}=0,\,\mathrm{Tr}(P_{\alpha}P_{\beta})=N\delta_{\alpha\beta}$,
in the case of random matrices as well randomized protocols, the late-time
plateau of $\langle\!\langle\overline{Q}^{\mathrm{ctn}}_{\alpha\alpha}(T)\rangle\!\rangle/T^{2}$
scales as
\begin{equation}
\lim_{T\to\infty}{\displaystyle \langle\!\langle\overline{Q}^{\mathrm{ctn}}_{\alpha\alpha}(T)\rangle\!\rangle/T^{2}\sim1/N=1/2^{L}},
\end{equation}
independent of the size of the support of the of coupling operators
$P_{\alpha}$, see numerical checks in the Supplemental Material~\citep{SM}.
It was shown in Ref.~\citep{shi2024universal} that due to the Lieb-Robinson
bound, at short times QFI scales linearly with $L$ for local Hamiltonians
and initial states with low entanglement. Furthermore, It was shown\textcolor{blue}{{}
}that in random quantum circuit, the typical QFI also scales linearly
with $L$ in the limit where the local Hilbert space dimension is
much larger than the discrete time~\citep{bergman2026asymptotic}.
Here, we show that typical stationary value of QFI decays exponentially
with the system size. 

In the case of finite $\bm{\theta}$, it can be readily seen that
the average QFI probes the SFF of $H_{\bm{\theta}}$ rather than $H$
in both, the randomized and deterministic coupling protocols. For
the case where $H$ is chaotic and $P_{\alpha}$ is integrable or
non-interacting, due to the connection between the average QFI and
SFF, the average QFI for different values of $\bm{\theta}$ also encodes
the information about chaos-to-integrability transitions, which is
left for future study.

The Haar ensemble of initial states and coupling operators can be
relaxed to the ensemble of initial states which forms a global state
$2$-design or global unitary $2$-design. Recently, there are numerous
proposals to implement higher-order unitary and state designs with
chaotic Hamiltonians~\citep{zhou2025realizing,sun2026unitary,hou2026statekdesigns}.
Nevertheless, our results only requires the ensemble of states and
unitary $2$-designs, which can be easily implemented. For fixed initial
state, we expect that the QFI can probe the higher-order spectral
correlations~\citep{liu2018spectral,oliviero2021randommatrix,cotler2020spectral}. 

To summarize, we reveal a fundamental connection between the average
QFI, related to quantum noise and the SFF, the diagnostics of quantum
chaos. Refs.~\citep{vsuntajs2020quantum,schiulaz2019thouless} challenge
the existence of many-body localization~\citep{sierant2025manybody,nandkishore2015manybody}
by investigating the characteristics of the SFF in the ergodic and
localized phase. Since the measurement of QFI have been realized in
several experimental platforms~\citep{strobel2014fisherinformation,zhang2026gapclosing,marciniak2022optimal},
we believe our proposal may provide a feasible way to measure the
SFF via QFI. Many intriguing questions are open in the future, including
establishing the connection between QFI or GT in open quantum systems,
investigate the statistics of the random QFI or GT~\citep{berry2020quantum,berry2020geometric},
etc. 

\textit{Acknowledgement.}--- It is a pleasure to thank Yunxiang Liao
and Lea F. Santos for helpful discussions. Codes used for the current
study are available from the authors upon requests. We acknowledge
support by the Natural Science Foundation of China (NSFC) under Grant
No.~12675029, Zhejiang University Start-up Grant, Zhejiang Key Laboratory
of R\&D and Application of Cutting-edge Scientific Instruments. During
the preparation of this manuscript, ChatGPT 6 was used to help write
the codes, suggest references, and inspire proof strategies. The main
ideas, steps and writing were all done without the use of AI and the
authors take full responsibility for the contents of this manuscript.

\let\oldaddcontentsline\addcontentsline     
\renewcommand{\addcontentsline}[3]{}         

\bibliographystyle{apsrev4-1}
\bibliography{ChaoticMetrology}

\clearpage\newpage\setcounter{equation}{0} \setcounter{section}{0}
\setcounter{subsection}{0} 
\global\long\def\theequation{S\arabic{equation}}%
\onecolumngrid \setcounter{enumiv}{0} 

\setcounter{equation}{0} \setcounter{section}{0} \setcounter{subsection}{0} \renewcommand{\theequation}{S\arabic{equation}} \onecolumngrid \setcounter{enumiv}{0} \setcounter{figure}{0} \renewcommand{\thefigure}{S\arabic{figure}}
\begin{center}
{\large\textbf{Supplemental Material}}{\large\par}
\par\end{center}

\let\addcontentsline\oldaddcontentsline     

\tableofcontents{}

\addtocontents{toc}{\protect\thispagestyle{empty}}
\pagenumbering{gobble}

\section{Multi-time Isospectral Twirling\label{sec:Isospectral-twirling}}

We consider random isospectral Hamiltonian ensemble with $H=U\Lambda U^{\dagger}$,
where the unitary matrix $U$ and the diagonal matrix $\Lambda=\mathrm{diag}(\lambda_{1},\,\lambda_{2},\cdots\lambda_{N})$
are statistically independent and sampled from the Haar random measure
and the probability measure $\mu(\Lambda)$, respectively. For the
average of GT over the isospectral Hamiltonian ensemble , i.e., 
\begin{equation}
\langle Q_{\alpha\beta}\rangle\equiv\mathbb{E}_{H\sim\mathrm{ISO}}[Q_{\alpha\beta}(H,\,\ket{\psi_{0}})],
\end{equation}
the calculation involves the notion of isospectral twirling introduced
in Ref.~\citep{oliviero2021randommatrix} as we shall see later.
Nevertheless, the isospectral twirling introduced in Ref.~\citep{oliviero2021randommatrix}
only involves a single time while the isospectral twirling here involves
multiple times. To be more specific, we define

\begin{equation}
\mathscr{T}^{(k)}_{\mathrm{ISO}}(X_{1},\,t_{1};X_{2},t_{2};\cdots X_{k},t_{k})\equiv\mathbb{E}_{H\sim\mathrm{ISO}}[\otimes^{k}_{\alpha=1}\mathrm{e}^{\mathrm{i}Ht_{\alpha}}X_{\alpha}\mathrm{e}^{\mathrm{i}Ht_{\alpha}}].
\end{equation}
As we shall see, one can relate $\mathscr{T}^{(k)}_{\mathrm{ISO}}(X_{1},\,t_{1};X_{2},t_{2};\cdots X_{k},t_{k})$
to the Haar twirling of operator through the following lemma:
\begin{lem}
\textup{It can be shown that 
\begin{equation}
\mathscr{T}^{(k)}_{\mathrm{ISO}}(X_{1},\,t_{1};X_{2},t_{2};\cdots X_{k},t_{k})=\mathbb{E}_{\Lambda\sim P(\Lambda)}\left\{ \mathrm{Tr}_{(1,2,\cdots k)}\left[\otimes^{k}_{\alpha=1}X_{\alpha}\mathscr{T}^{(2k)}_{\mathrm{Haar}}\left((\otimes^{k}_{\beta=1}\mathrm{e}^{-\mathrm{i}\Lambda t_{k}})\otimes\otimes^{k}_{\beta=1}\mathrm{e}^{\mathrm{i}\Lambda t_{k}})\right)\prod^{k}_{\delta=1}\mathbb{F}_{\delta,\,k+\delta}\right]\right\} ,
\end{equation}
where }$V=U\Lambda U^{\dagger}$\textup{ and
\begin{equation}
\mathscr{T}^{(2k)}_{\mathrm{Haar}}(X)\equiv\mathbb{E}_{U\sim\mathrm{Haar}}[U^{\otimes2k}XU^{\dagger\otimes2k}]
\end{equation}
is the twirling operation $\mathbb{F}_{j,\,k}$ is the swap operator
between the $j$-th and $k$-th Hilbert spaces and $\mathrm{Tr}_{(1,2,\cdots,\,k)}$
is the partial trace over the first half of the Hilbert space. }
\end{lem}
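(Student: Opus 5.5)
The plan is to reduce the multi-time isospectral twirl to a single $2k$-fold Haar twirl with the standard ``swap trick''. This trick rewrites a sandwiched operator $AXB$ as a partial trace of a tensor product against a swap. I read the statement with the evident intended conventions. The conjugation in $\mathscr{T}^{(k)}_{\mathrm{ISO}}$ is $\mathrm{e}^{\mathrm{i}Ht_{\alpha}}X_{\alpha}\mathrm{e}^{-\mathrm{i}Ht_{\alpha}}$, as in the metrological generators. The argument of $\mathscr{T}^{(2k)}_{\mathrm{Haar}}$ is $(\otimes^{k}_{\beta=1}\mathrm{e}^{-\mathrm{i}\Lambda t_{\beta}})\otimes(\otimes^{k}_{\beta=1}\mathrm{e}^{\mathrm{i}\Lambda t_{\beta}})$, with the index $\beta$ rather than $k$ on the times.

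\emph{Step 1 (single-copy identity).} First I would verify, for arbitrary $N\times N$ operators $A,X,B$, the identity
\begin{equation*}
AXB=\mathrm{Tr}_{1}\!\left[(X\otimes\mathbb{I})(B\otimes A)\mathbb{F}_{1,2}\right].
\end{equation*}
This is a direct check in a basis. Write $\mathbb{F}_{1,2}=\sum_{ij}\ket{i}\bra{j}\otimes\ket{j}\bra{i}$. Then $(XB\otimes A)\mathbb{F}_{1,2}=\sum_{ij}XB\ket{i}\bra{j}\otimes A\ket{j}\bra{i}$. Tracing out the first factor gives $\sum_{ij}\braket{j|XB|i}A\ket{j}\bra{i}=AXB$.

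\emph{Step 2 (tensorize).} Apply Step 1 to each of the $k$ factors, with $A_{\alpha}=\mathrm{e}^{\mathrm{i}Ht_{\alpha}}$ and $B_{\alpha}=\mathrm{e}^{-\mathrm{i}Ht_{\alpha}}$. Order the $2k$ copies so that copy $\delta$ carries $X_{\delta}B_{\delta}$ and copy $k+\delta$ carries $A_{\delta}$. Tensor products of partial traces over distinct factors combine into a single partial trace over copies $1,\dots,k$. The product of commuting swaps $\prod_{\delta}\mathbb{F}_{\delta,k+\delta}$ implements all pairings simultaneously. This gives
\begin{equation*}
\otimes^{k}_{\alpha=1}\mathrm{e}^{\mathrm{i}Ht_{\alpha}}X_{\alpha}\mathrm{e}^{-\mathrm{i}Ht_{\alpha}}=\mathrm{Tr}_{(1,\dots,k)}\!\left[\left(\otimes^{k}_{\alpha=1}X_{\alpha}\otimes\mathbb{I}^{\otimes k}\right)\left(\otimes^{k}_{\beta=1}\mathrm{e}^{-\mathrm{i}Ht_{\beta}}\otimes\otimes^{k}_{\beta=1}\mathrm{e}^{\mathrm{i}Ht_{\beta}}\right)\prod^{k}_{\delta=1}\mathbb{F}_{\delta,k+\delta}\right].
\end{equation*}
Here $\otimes^{k}_{\alpha=1}X_{\alpha}$ in the lemma is understood as acting on the first $k$ copies.

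\emph{Step 3 (average).} Insert $H=U\Lambda U^{\dagger}$, so that $\mathrm{e}^{\pm\mathrm{i}Ht}=U\mathrm{e}^{\pm\mathrm{i}\Lambda t}U^{\dagger}$. The middle factor then becomes $U^{\otimes2k}\big(\otimes_{\beta}\mathrm{e}^{-\mathrm{i}\Lambda t_{\beta}}\otimes\otimes_{\beta}\mathrm{e}^{\mathrm{i}\Lambda t_{\beta}}\big)U^{\dagger\otimes2k}$. The map sending this middle factor to the full right-hand side is linear: left multiplication by a fixed operator, right multiplication by the fixed swap product, then a partial trace. It therefore commutes with expectation. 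Since $U$ and $\Lambda$ are independent, Fubini lets me write $\mathbb{E}_{H\sim\mathrm{ISO}}=\mathbb{E}_{\Lambda}\mathbb{E}_{U}$. The inner average over $U$ is by definition $\mathscr{T}^{(2k)}_{\mathrm{Haar}}$ applied to the $\Lambda$-dependent tensor product, which yields the claimed formula.

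The only real obstacle is bookkeeping. One must make sure the ordering of $X_{\alpha}$ relative to the $\mathrm{e}^{\mp\mathrm{i}\Lambda t}$ factors, and the placement of the swaps, match the stated formula. Any mismatch only permutes tensor copies, and the Haar twirl commutes with such permutations, so no genuine difficulty arises. Integrability is automatic because all operators involved are bounded, being unitaries or fixed operators.
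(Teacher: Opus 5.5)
Your proposal is correct and follows essentially the same route as the paper: apply the partial swap trick to each factor $\mathrm{e}^{\mathrm{i}Ht_{\alpha}}X_{\alpha}\mathrm{e}^{-\mathrm{i}Ht_{\alpha}}$, tensorize with the product of swaps $\prod_{\delta}\mathbb{F}_{\delta,k+\delta}$, then substitute $H=U\Lambda U^{\dagger}$ and average. The differences are cosmetic: you use the $\mathrm{Tr}_{1}$ form of the swap identity and fix the copy ordering from the start, whereas the paper uses the $\mathrm{Tr}_{2}$ form on pairs $(2\alpha-1,2\alpha)$ and permutes the tensor factors afterwards; your reading of the intended conventions ($\mathrm{e}^{-\mathrm{i}Ht_{\alpha}}$ on the right, $t_{\beta}$ in place of $t_{k}$) matches the paper's proof.
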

\begin{proof}
The proof is straightforward: Using the partial swap trick $AB=\mathrm{Tr}_{2}(A\otimes B\mathbb{F})$,
we find 
\begin{align}
\otimes^{k}_{\alpha=1}\left(\mathrm{e}^{\mathrm{i}Ht_{\alpha}}X_{\alpha}\mathrm{e}^{-\mathrm{i}Ht_{\alpha}}\right) & =\otimes^{k}_{\alpha=1}\left(U\mathrm{e}^{\mathrm{i}\Lambda t_{\alpha}}U^{\dagger}X_{\alpha}U\mathrm{e}^{-\mathrm{i}\Lambda t_{\alpha}}U^{\dagger}\right)\nonumber \\
 & =\otimes^{k}_{\alpha=1}\left(\mathrm{Tr}_{2\alpha}(\mathbb{I}\otimes X_{\alpha})U^{\otimes2}(\mathrm{e}^{\mathrm{i}\Lambda t_{\alpha}}\otimes\mathrm{e}^{-\mathrm{i}\Lambda t_{\alpha}})U^{\dagger\otimes2}\mathbb{F}_{2\alpha-1,2\alpha}\right)\nonumber \\
 & =\mathrm{Tr}_{2,4\cdots,2k}\left[(\mathbb{I}\otimes X_{1}\otimes\cdots\otimes\mathbb{I}\otimes X_{k})U^{\otimes2k}(\mathrm{e}^{\mathrm{i}\Lambda t_{1}}\otimes\mathrm{e}^{-\mathrm{i}\Lambda t_{1}}\cdots\mathrm{e}^{\mathrm{i}\Lambda t_{k}}\otimes\mathrm{e}^{-\mathrm{i}\Lambda t_{k}})U^{\dagger\otimes2k}(\mathbb{F}_{12}\otimes\mathbb{F}_{34}\cdots\mathbb{F}_{2k-1,2k})\right].
\end{align}
Upon permuting the order of Hilbert spaces, it can be readily found
\begin{equation}
\otimes^{k}_{\alpha=1}\left(\mathrm{e}^{\mathrm{i}Ht_{\alpha}}X_{\alpha}\mathrm{e}^{-\mathrm{i}Ht_{\alpha}}\right)=\mathrm{Tr}_{(1,2,\cdots k)}\left[\left((\otimes^{k}_{\alpha=1}X_{\alpha})U^{\otimes2k}(\otimes^{k}_{\beta=1}\mathrm{e}^{-\mathrm{i}\Lambda t_{\alpha}}\otimes^{k}_{\gamma=1}\mathrm{e}^{\mathrm{i}\Lambda t_{\alpha}})U^{\dagger\otimes2k}\prod^{k}_{\delta=1}\mathbb{F}_{\delta,\,k+\delta}\right)\right],
\end{equation}
which concludes the proof.
\end{proof}

Let us first note, for the continuous protocol, the average of GT
over the isospectral ensemble for a given initial state is 
\begin{align}
\langle\!\langle Q^{\mathrm{ctn}}_{\alpha\beta}\rangle & \!\rangle=4\int^{T}_{0}ds\int^{T}_{0}d\tau\mathrm{Tr}\left(\rho_{0}\mathbb{E}_{H\sim\mathrm{ISO}}[e^{\mathrm{i}Hs}P_{\alpha}e^{-\mathrm{i}Hs}e^{\mathrm{i}H\tau}P_{\beta}e^{-\mathrm{i}H\tau}]\right)\nonumber \\
 & -4\int^{T}_{0}ds\int^{T}_{0}d\tau\mathbb{E}_{H\sim\mathrm{ISO}}\left[\mathrm{Tr}\left(\rho_{0}e^{\mathrm{i}Hs}P_{\alpha}e^{-\mathrm{i}Hs}\right)\mathrm{Tr}\left(\rho_{0}e^{\mathrm{i}H\tau}P_{\beta}e^{-\mathrm{i}H\tau}\right)\right],
\end{align}
while for the quench protocol, the corresponding result is
\begin{equation}
\langle\!\langle Q^{\mathrm{qch}}_{\alpha\beta}\rangle\!\rangle=4T^{2}\mathrm{Tr}(\rho_{0}\mathbb{E}_{H\sim\mathrm{ISO}}[e^{\mathrm{i}HT}P_{\alpha}e^{-\mathrm{i}HT}e^{\mathrm{i}HT}P_{\beta}e^{-\mathrm{i}HT}])-\mathbb{E}_{H\sim\mathrm{GUE}}[\mathrm{Tr}(\rho_{0}e^{\mathrm{i}HT}P_{\alpha}e^{-\mathrm{i}HT})\mathrm{Tr}(\rho_{0}e^{\mathrm{i}HT}P_{\beta}e^{-\mathrm{i}HT})].
\end{equation}
Using the partial swap trick and $\mathrm{Tr}A\mathrm{Tr}B=\mathrm{Tr}(A\otimes B)$,
we obtain 
\begin{equation}
\langle\!\langle Q^{\mathrm{ctn}}_{\alpha\beta}\rangle\!\rangle=4\int^{T}_{0}ds\int^{T}_{0}d\tau\left[\mathrm{Tr}\left(\rho_{0}\mathrm{Tr}_{2}\left(\mathscr{T}^{(2)}_{\mathrm{ISO}}(P_{\alpha},s;P_{\beta},\tau)\mathbb{F}_{12}\right)\right)-\mathrm{Tr}\left(\rho_{0}\otimes\rho_{0}\mathscr{T}^{(2)}_{\mathrm{ISO}}(P_{\alpha},s;P_{\beta},\tau)\right)\right].\label{eq:Q-ctn-GUE}
\end{equation}
Similarly for the quench protocol, we find 
\begin{equation}
\langle\!\langle Q^{\mathrm{qch}}_{\alpha\beta}\rangle\!\rangle=4T^{2}\left[\mathrm{Tr}\left(\rho_{0}\mathrm{Tr}_{2}\left(\mathscr{T}^{(2)}_{\mathrm{ISO}}(P_{\alpha},t;P_{\beta},t)\mathbb{F}_{12}\right)\right)-\mathrm{Tr}\left(\rho_{0}\otimes\rho_{0}\mathscr{T}^{(2)}_{\mathrm{ISO}}(P_{\alpha},t;P_{\beta},t)\right)\right],
\end{equation}
where 
\begin{equation}
\mathrm{Tr}_{2}\left(\mathscr{T}^{(2)}_{\mathrm{ISO}}(X_{1},t;X_{2},t)\mathbb{F}_{12}\right)=\mathscr{T}^{(1)}_{\mathrm{ISO}}(X_{1}X_{2},t).\label{eq:T2-T1-GUE-id}
\end{equation}

For $k=1$, using Schur-Weyl duality~\citep{roberts2017chaosand,mele2024introduction},
we can find that 
\begin{equation}
\mathscr{T}^{(1)}_{\mathrm{ISO}}(X,t)=\mathbb{E}_{\Lambda\sim P(\Lambda)}\left\{ \mathrm{Tr}_{1}\left[X\mathscr{T}^{(2)}_{\mathrm{Haar}}\left(\mathrm{e}^{-\mathrm{i}\Lambda t}\otimes\mathrm{e}^{\mathrm{i}\Lambda t})\right)\mathbb{F}_{12}\right]\right\} =p(t)X+[1-p(t)]\frac{\mathrm{Tr}X}{N}\mathbb{I},\label{eq:T1-GUE}
\end{equation}
where 
\begin{equation}
p(t)\equiv\frac{\tilde{K}(t)-1}{N^{2}-1},
\end{equation}
and $\tilde{K}(t)$ is the unnormalized SFF defined as
\begin{equation}
\tilde{K}(t)\equiv\langle\sum_{kl}\mathrm{e}^{\mathrm{i}(\lambda_{k}-\lambda_{l})}\rangle_{\Lambda}=\langle\!\langle|\mathrm{Tr}\mathrm{e}^{\mathrm{i}Ht}|^{2}\rangle\!\rangle.\label{eq:K-tilde-def}
\end{equation}
However, the continuous and quench protocols concern with the case
$k=2$ where one needs to evaluate 2-copy twirling of the dynamics
generated by GUE, i..e, $\mathscr{T}^{(2)}_{\mathrm{GUE}}$, which
amounts to the evaluation of $4$-copy twirling of the Haar random
gates, i.e.,

\begin{equation}
\mathscr{T}^{(4)}_{\mathrm{Haar}}\left((\otimes^{2}_{\beta=1}\mathrm{e}^{-\mathrm{i}\Lambda t_{k}})\otimes(\otimes^{2}_{\beta=1}\mathrm{e}^{\mathrm{i}\Lambda t_{k}})\right)=\sum_{\pi,\sigma\in S_{4}}\mathrm{Wg}_{4}(\pi^{-1}\sigma)\mathrm{Tr}\left(T^{\dagger}_{4}(\sigma)(\otimes^{2}_{\beta=1}\mathrm{e}^{-\mathrm{i}\Lambda t_{k}})\otimes(\otimes^{2}_{\beta=1}\mathrm{e}^{\mathrm{i}\Lambda t_{k}})\right)T_{4}(\pi),\label{eq:T4-Haar}
\end{equation}
where $T_{k}(\pi)$ is the representation of the permutation group
$S_{N}$ in $(\mathbb{C}^{N})^{\otimes k}$. The exact expression
of Eq.~(\ref{eq:T4-Haar}) can be obtained via Schur-Weyl duality,
which involves a cumbersome expression containing $(4!)^{2}$ terms.

\section{\label{sec:QFI-closed-form}Average of QFI over Haar ensemble of
initial states and Isospectral ensemble}

In this section, we further perform the average over the Haar ensemble
of initial states, apart from the average over the isospectral ensemble.
Recall 
\begin{align}
\overline{Q}_{\alpha\beta} & \equiv\mathbb{E}_{\ket{\psi_{0}}\sim\mathrm{Haar}}[Q_{\alpha\beta}(H,\,\ket{\psi_{0}}),\\
\langle\!\langle\overline{Q}_{\alpha\beta}\rangle\!\rangle & \equiv\mathbb{E}_{H\sim\mathrm{ISO}}\mathbb{E}_{\ket{\psi_{0}}\sim\mathrm{Haar}}[Q_{\alpha\beta}(H,\,\ket{\psi_{0}})].
\end{align}
 Since 
\begin{equation}
\mathbb{E}_{\ket{\psi_{0}}\sim\mathrm{Haar}}\ket{\psi_{0}}\bra{\psi_{0}}=\frac{\mathbb{I}}{N},\,\mathbb{E}_{\ket{\psi_{0}}\sim\mathrm{Haar}}\ket{\psi_{0}}\bra{\psi_{0}}^{\otimes2}=\frac{\mathbb{I}+\mathbb{F}}{N(N+1)},
\end{equation}
we find
\begin{equation}
\mathbb{E}_{\ket{\psi_{0}}\sim\mathrm{Haar}}(\mathrm{Tr}\ket{\psi_{0}}\bra{\psi_{0}}AB)=\frac{1}{N}\mathrm{Tr}(AB),\,\mathbb{E}_{\ket{\psi_{0}}\sim\mathrm{Haar}}\mathrm{Tr}[\ket{\psi_{0}}\bra{\psi_{0}}^{\otimes2}A\otimes B]=\frac{\mathrm{Tr}A\mathrm{Tr}B+\mathrm{Tr}(AB)}{N(N+1)},
\end{equation}
and
\begin{equation}
\overline{Q}_{\alpha\beta}\equiv\mathbb{E}_{\ket{\psi_{0}}\sim\mathrm{Haar}}[Q_{\alpha\beta}(\ket{\psi_{0}})]=\frac{4[N\mathrm{Tr}(G_{\alpha}G_{\beta})-\mathrm{Tr}G_{\alpha}\mathrm{Tr}G_{\beta}]}{N(N+1)}.\label{eq:Q-bar}
\end{equation}
For the continuous protocol, we find 
\begin{equation}
\langle\!\langle\overline{Q}^{\mathrm{ctn}}_{\alpha\beta}(T)\rangle\!\rangle=\frac{4\left(N\int^{T}_{0}ds\int^{T}_{0}d\tau\mathrm{Tr}[\mathscr{T}^{(1)}_{\mathrm{ISO}}(P_{\alpha},s-\tau)P_{\beta}]-T^{2}\mathrm{Tr}P_{\alpha}\mathrm{Tr}P_{\beta}\right)}{N(N+1)}.\label{eq:Q-ctn-ISO}
\end{equation}
Using Eq.~(\ref{eq:T1-GUE}), we find that 
\begin{align}
\mathrm{Tr}[\mathscr{T}^{(1)}_{\mathrm{ISO}}(P_{\alpha},s-\tau)P_{\beta}] & =\frac{\tilde{K}(s-\tau)-1}{N^{2}-1}\mathrm{Tr}(P_{\alpha}P_{\beta})-\frac{\tilde{K}(s-\tau)-N^{2}}{N(N^{2}-1)}\mathrm{Tr}P_{\alpha}\mathrm{Tr}P_{\beta}\nonumber \\
 & =\frac{N\mathrm{Tr}(P_{\alpha}P_{\beta})-\mathrm{Tr}P_{\alpha}\mathrm{Tr}P_{\beta}}{(N-1)N(N+1)}\tilde{K}(s-\tau)-\frac{\mathrm{Tr}(P_{\alpha}P_{\beta})-N\mathrm{Tr}P_{\alpha}\mathrm{Tr}P_{\beta}}{(N-1)(N+1)},
\end{align}
and therefore
\begin{equation}
\langle\!\langle\mathrm{Tr}(G_{\alpha}G_{\beta})\rangle\!\rangle=\frac{N\mathrm{Tr}(P_{\alpha}P_{\beta})-\mathrm{Tr}P_{\alpha}\mathrm{Tr}P_{\beta}}{(N-1)N(N+1)}\int^{T}_{0}ds\int^{T}_{0}d\tau\tilde{K}(s-\tau)-\frac{\mathrm{Tr}(P_{\alpha}P_{\beta})-N\mathrm{Tr}P_{\alpha}\mathrm{Tr}P_{\beta}}{(N-1)(N+1)}T^{2},\label{eq:Tr-GaGb}
\end{equation}
where $\tilde{K}(t)$ is defined in Eq.\ref{eq:K-tilde-def}. Thus
we obtain 
\begin{equation}
\langle\!\langle\bar{Q}^{\mathrm{ctn}}_{\alpha\beta}(T)\rangle\!\rangle=\frac{4\left(N\mathrm{Tr}(P_{\alpha}P_{\beta})-\mathrm{Tr}P_{\alpha}\mathrm{Tr}P_{\beta}\right)}{(N-1)N(N+1)^{2}}\left[2\int^{T}_{0}(T-t)\tilde{K}(t)dt-T^{2}\right].
\end{equation}
Using the identity 
\begin{equation}
\int^{T}_{0}ds\int^{T}_{0}d\tau\tilde{K}(s-\tau)=2\int^{T}_{0}(T-t)\tilde{K}(t)dt,\label{eq:double-int-id}
\end{equation}
and $K(t)=\tilde{K}(t)/N^{2}$, we arrive at Eq.~(\ref{eq:Q-ctn})
in the main text. Eq.~(\ref{eq:Q-qch}) can be obtained in a similar
way. 

\section{\label{sec:Closed-form-expression-GUE}Closed-form expression for
the average QFIM over Haar random initial states and GUE}

In this section, we present the detailed calculation of the closed
form expression for the average QFIM/GT when the Hamiltonian ensemble
is GUE. The closed-form expression for SFF for GUE is\ \citep{cotler2020spectral}
\begin{equation}
K(t)=\begin{cases}
\frac{1}{t^{2}}J^{2}_{1}(2t)+\frac{t}{2N^{2}} & t\leq2N\\
\frac{1}{t^{2}}J^{2}_{1}(2t)+\frac{1}{N} & t\ge2N
\end{cases},\label{eq:GUE_K(t)}
\end{equation}
where $J_{\nu}(t)$ are the first-kind Bessel function of order $\nu$.
The smoothed integral SFF $I(T)$ defined as 
\begin{equation}
I(T)=\int^{1}_{0}(1-u)K(uT)du=\frac{1}{T^{2}}\int^{T}_{0}(T-t)K(t)dt,
\end{equation}
consists of three parts. The first part appears for any time $T$
\begin{equation}
{\displaystyle I_{1}(T)}=\frac{1}{T^{2}}\int^{T}_{0}dt{\displaystyle (T-t)\frac{1}{t^{2}}J^{2}_{1}(2t)}=\frac{1}{T^{2}}\left[\mathcal{J}(T)-\frac{1}{2}\right],
\end{equation}
where 
\begin{equation}
\mathcal{J}(t)\equiv{\displaystyle \left[\frac{8}{3}t^{2}+\frac{1}{2}\right]}{\displaystyle {\displaystyle {\displaystyle \left[J^{2}_{0}(2t)+J^{2}_{1}(2t)\right]}-\frac{1}{3}J^{2}_{1}(2t)-\frac{4}{3}tJ_{0}(2t)J_{1}(2t)}}.
\end{equation}
The second part is applicable when $T\leq2N$,
\begin{equation}
{\displaystyle {\displaystyle I_{2}}}=\frac{1}{T^{2}}\int^{T}_{0}dt{\displaystyle (T-t)\frac{t}{2N^{2}}={\displaystyle \frac{T}{12N^{2}}.}}
\end{equation}
The last part is applicable when $T\ge2N$, i.e.,
\begin{equation}
{\displaystyle {\displaystyle I_{3}}}=\frac{1}{T^{2}}\left[\int^{2N}_{0}dt{\displaystyle (T-t)\frac{t}{2N^{2}}+\int^{T}_{2N}dt{\displaystyle (T-t)\frac{1}{N}}}\right]=\frac{1}{2N}-\frac{1}{T}+\frac{2N}{3T^{2}}.
\end{equation}
So, the integral of the SFF is 

\begin{equation}
I(T)=\begin{cases}
\frac{\mathcal{J}(T)}{T^{2}}+\frac{T}{12N^{2}}-\frac{1}{2T^{2}} & T\leq2N\\
\frac{\mathcal{J}(T)}{T^{2}}+\frac{1}{2N}-\frac{1}{T}+\left(\frac{2N}{3}-\frac{1}{2}\right)\frac{1}{T^{2}} & T\ge2N
\end{cases}.\label{eq:GUE-Integral-SFF}
\end{equation}
Since the Bessel functions behave as $J_{\nu}(t)\sim{\displaystyle \sqrt{{\displaystyle \frac{2}{\pi t}}}}\cos{\displaystyle \left(t-\frac{1}{2}\nu\pi-\frac{1}{4}\pi\right)}$
as $t\to\infty$, as $T\to\infty$

\begin{align}
J_{0}(2T)\sim{\displaystyle {\displaystyle \frac{1}{{\displaystyle \sqrt{\pi T}}}}}\cos{\displaystyle \left(2T-\frac{1}{4}\pi\right)},\, & J_{1}(2T)\sim{\displaystyle {\displaystyle \frac{1}{{\displaystyle \sqrt{\pi T}}}}}\sin{\displaystyle \left(2T-\frac{1}{4}\pi\right)}\\
J_{0}(2T)J_{1}(2T)\sim{\displaystyle \frac{1}{\pi T}}\cos\mbox{ \ensuremath{{\displaystyle \left(2T-\frac{1}{4}\pi\right)}}}\sin{\displaystyle \left(2T-\frac{1}{4}\pi\right)},\, & J^{2}_{0}(2T)+J^{2}_{1}(2T)\sim\frac{1}{\pi T}
\end{align}
Therefore, ${\displaystyle \lim_{T\to\infty}\frac{\mathcal{J}(T)}{T^{2}}}=0$,
\begin{equation}
\lim_{T\to\infty}I(T)={\displaystyle \frac{1}{2N}},\quad\lim_{T\to\infty}\langle\!\langle\bar{Q}^{\mathrm{ctn}}_{\alpha\beta}(T)\rangle\!\rangle=\frac{C_{\alpha\beta}T^{2}}{N+1}.
\end{equation}
On the other hand, in the short-time limit using Taylor series up
to ${\displaystyle O{\displaystyle (T^{2})}}$ we have $J_{0}(2T)\sim1-T^{2}$,
$J_{1}(2T)\sim T-{\displaystyle \frac{1}{2}}T^{3}$, $J^{2}_{0}(2T)\sim1-2T^{2}$,
$J^{2}_{1}(2T)\sim T^{2}$, $J_{0}(2T)J_{1}(2T)\sim T-{\displaystyle \frac{1}{2}}T^{2}$,
and $J^{2}_{0}(2T)+J^{2}_{1}(2T)\sim1-T^{2}$, we find

\begin{equation}
{\displaystyle \frac{\mathcal{J}(T)}{T^{2}}}\sim{\displaystyle \frac{1}{2}-\frac{8}{3}T^{2}+\frac{1}{2T^{2}}}+\frac{2T}{3},
\end{equation}
and 
\begin{equation}
I(T)={\displaystyle \frac{\mathcal{J}(T)}{T^{2}}}{\displaystyle +\frac{T}{12N^{2}}-\frac{1}{2T^{2}}}\sim{\displaystyle \frac{1}{2}-\frac{8}{3}T^{2}}+\frac{2T}{3}+\frac{T}{12N^{2}}.
\end{equation}
Therefore we have 
\begin{equation}
{\displaystyle \lim_{T\to0}I(T)}={\displaystyle \frac{{\displaystyle 1}}{2}},\;\lim_{T\to0}\langle\!\langle\bar{Q}^{\mathrm{ctn}}_{\alpha\beta}(T)\rangle\!\rangle=4C_{\alpha\beta}T^{2}.
\end{equation}

\section{Average QFI over Haar random initial states and realistic Hamiltonian
ensemble}

In this section, we give the detailed derivations of generalizing
the average QFI from the random matrix predictions to realistic Hamiltonian
ensemble. Throughout this section, we consider\textcolor{black}{{} the
continuous protocol with }the average over the ensemble of initial
Haar random states. According to Eq.~(\ref{eq:Q-bar}), the only
term that depends on the Hamiltonian ensemble is
\begin{equation}
\mathrm{Tr}(G^{\mathrm{ctn}}_{\alpha}G^{\mathrm{ctn}}_{\beta})=\int^{T}_{0}ds\int^{T}_{0}d\tau\mathrm{Tr}\left[\mathrm{e}^{\mathrm{i}H(s-\tau)}P_{\alpha}\mathrm{e}^{-\mathrm{i}H(s-\tau)}P_{\beta}\right].\label{eq:Tr-Gctn-Gctn}
\end{equation}

\subsection{2-copy twirling of the coupling operators}

It is clear from Eq.\ (\ref{eq:Tr-Gctn-Gctn}) that the 1-copy twirling
of the coupling operators, where the Haar random unitary acting on
$P_{\alpha}$ and $P_{\beta}$ are statistically independent, leads
to the same result as in the quench protocol, which is independent
of the Hamiltonian. Therefore we consider 2-copy twirling, where the
unitary acting on $P_{\alpha}$ and $P_{\beta}$ are identical. After
the average over the Haar ensemble and the Hamiltonian ensemble, we
obtain
\begin{align}
\langle\!\langle\mathrm{Tr}(\widetilde{G^{\mathrm{ctn}}_{\alpha}G^{\mathrm{ctn}}_{\beta}}) & \rangle\!\rangle=\int d\mu(H)\int^{T}_{0}ds\int^{T}_{0}d\tau\mathrm{Tr}\mathbb{E}_{V\sim\mathrm{Haar}}\left[\mathrm{e}^{\mathrm{i}H(s-\tau)}VP_{\alpha}V^{\dagger}\mathrm{e}^{-\mathrm{i}H(s-\tau)}VP_{\beta}V^{\dagger}\right]\nonumber \\
 & =\int^{T}_{0}ds\int^{T}_{0}d\tau\mathrm{Tr}\int d\mu(H)\mathbb{E}_{V\sim\mathrm{Haar}}\left[\mathrm{e}^{\mathrm{i}V^{\dagger}HV(s-\tau)}P_{\alpha}\mathrm{e}^{-\mathrm{i}V^{\dagger}H(s-\tau)V}P_{\beta}\right],
\end{align}
where $\langle\!\langle\bullet\rangle\!\rangle$ denotes the average
over the Hamiltonian ensemble, which follows a generic probability
measure $d\mu_{H}$ and is not Haar random, and $\langle\bullet\rangle_{P}$
denotes the Haar average over the coupling operators. It is crucial
to observe that the ensemble $\{H^{\prime}=V^{\dagger}HV:V\sim\mathrm{Haar},H\sim d\mu(H)\}$
is an isospectral ensemble since for given $H=U\Lambda U^{\dagger}$,
thanks to the invariance of Haar measure, we find 
\begin{equation}
\int d\mu(H)\mathbb{E}_{V\sim\mathrm{Haar}}f(H^{\prime})=\int d\mu(U\Lambda U^{\dagger})\mathbb{E}_{V\sim\mathrm{Haar}}f(V^{\dagger}U\Lambda VU^{\dagger})=\int d\mu(\Lambda)\mathbb{E}_{V\sim\mathrm{Haar}}f(V^{\dagger}\Lambda V),
\end{equation}
where $d\mu(\Lambda)\equiv\int_{U}d\mu(U\Lambda U^{\dagger})$ is
the marginalized probability measure upon integrating over the unitary
matrix $U$ that diagonalizes the Hamiltonian $H$. Note that in above
derivation neither do we assume the statistical independence between
$U$ and $\Lambda$ nor assume the permutation invariance of the probability
measure $d\mu(\Lambda)$. Therefore, conclude that 
\begin{equation}
\mathrm{Tr}\int d\mu(H)\mathbb{E}_{V\sim\mathrm{Haar}}\left[\mathrm{e}^{\mathrm{i}V^{\dagger}HV(s-\tau)}P_{\alpha}\mathrm{e}^{-\mathrm{i}V^{\dagger}H(s-\tau)V}P_{\beta}\right]=\mathrm{Tr}[\mathscr{T}^{(1)}_{\mathrm{ISO}}(P_{\alpha},s-\tau)P_{\beta}].
\end{equation}
One can also check explicitly by noting that
\begin{equation}
\mathrm{Tr}\left[\mathrm{e}^{\mathrm{i}H(s-\tau)}VP_{\alpha}V^{\dagger}\mathrm{e}^{-\mathrm{i}H(s-\tau)}VP_{\beta}V^{\dagger}\right]=\mathrm{Tr}\left[V^{\otimes2}(P_{\alpha}\otimes P_{\beta})V^{\dagger\otimes2}(\mathrm{e}^{-\mathrm{i}H(s-\tau)}\otimes\mathbb{I})(\mathbb{I}\otimes\mathrm{e}^{\mathrm{i}H(s-\tau)})\mathbb{F}\right],
\end{equation}
and 
\begin{align}
 & \mathrm{Tr}\mathbb{E}_{V\sim\mathrm{Haar}}\left[V^{\otimes2}(P_{\alpha}\otimes P_{\beta})V^{\dagger\otimes2}(\mathrm{e}^{-\mathrm{i}H(s-\tau)}\otimes\mathbb{I})(\mathbb{I}\otimes\mathrm{e}^{\mathrm{i}H(s-\tau)})\mathbb{F}\right]\nonumber \\
= & \frac{\mathrm{Tr}P_{\alpha}\mathrm{Tr}P_{\beta}-N^{-1}\mathrm{Tr}(P_{\alpha}P_{\beta})}{N^{2}-1}N+\frac{\mathrm{Tr}(P_{\alpha}P_{\beta})-N^{-1}\mathrm{Tr}P_{\alpha}\mathrm{Tr}P_{\beta}}{N^{2}-1}\big|\mathrm{Tr}\mathrm{e}^{\mathrm{i}H(s-\tau)}|^{2}\nonumber \\
= & \frac{N\mathrm{Tr}(P_{\alpha}P_{\beta})-\mathrm{Tr}P_{\alpha}\mathrm{Tr}P_{\beta}}{(N-1)N(N+1)}\big|\mathrm{Tr}\mathrm{e}^{\mathrm{i}H(s-\tau)}|^{2}-\frac{\mathrm{Tr}(P_{\alpha}P_{\beta})-N\mathrm{Tr}P_{\alpha}\mathrm{Tr}P_{\beta}}{N^{2}-1}.
\end{align}
Clearly, upon the average over the ensemble of random Hamiltonians
and the integration over $s$ and $\tau$ from $[0,T]$, one can readily
obtain Eq.\ (\ref{eq:Tr-GaGb})

As such, we conclude that upon performing the two-copy twirling of
the coupling operators, for any realistic Hamiltonian, the average
QFI is the same as the one for the isospectral ensemble.

\subsection{Analytic expression based on weak statistical properties of the Hamiltonian
ensemble}

Upon defining $\Gamma^{(\alpha)}_{k\ell}\equiv\braket{\lambda_{k}|P_{\alpha}|\lambda_{\ell}}$,
we can 
\begin{equation}
\mathrm{Tr}(G^{\mathrm{ctn}}_{\alpha}G^{\mathrm{ctn}}_{\beta})=\sum_{k,\,\ell,\,k\neq\ell}\int^{T}_{0}ds\int^{T}_{0}d\tau\mathrm{e}^{\mathrm{i}(s-\tau)(\lambda_{k}-\lambda_{l})}\Gamma^{(\alpha)}_{k\ell}\Gamma^{(\beta)}_{\ell k}+T^{2}\sum_{k}\Gamma^{(\alpha)}_{kk}\Gamma^{(\beta)}_{kk}.
\end{equation}
Using the fact that $\sum_{kl}\Gamma^{(\alpha)}_{k\ell}\Gamma^{(\beta)}_{\ell k}=\mathrm{Tr}(P_{\alpha}P_{\beta})$,
we find
\begin{align}
\langle\!\langle\mathrm{Tr}(G^{\mathrm{ctn}}_{\alpha}G^{\mathrm{ctn}}_{\beta})\rangle\!\rangle & =\int^{T}_{0}ds\int^{T}_{0}d\tau\langle\mathrm{e}^{\mathrm{i}(s-\tau)(\lambda_{k}-\lambda_{\ell})}\rangle\bigg\langle\sum_{k,\,\ell,\,k\neq\ell}\Gamma^{(\alpha)}_{k\ell}\Gamma^{(\beta)}_{\ell k}\bigg\rangle+T^{2}\langle\sum_{k}\Gamma^{(\alpha)}_{kk}\Gamma^{(\beta)}_{kk}\rangle\nonumber \\
 & =\mathrm{Tr}\left(P_{\alpha}P_{\beta}\right)\int^{T}_{0}ds\int^{T}_{0}d\tau\langle\mathrm{e}^{\mathrm{i}(s-\tau)(\lambda_{k}-\lambda_{\ell})}\rangle_{k\neq\ell}-\left(\int^{T}_{0}ds\int^{T}_{0}d\tau\langle\mathrm{e}^{\mathrm{i}(s-\tau)(\lambda_{k}-\lambda_{\ell})}\rangle_{k\neq\ell}-T^{2}\right)\langle\sum_{k}\Gamma^{(\alpha)}_{kk}\Gamma^{(\beta)}_{kk}\rangle.
\end{align}
Since 
\begin{equation}
\tilde{K}(t)=N(N-1)\langle\mathrm{e}^{\mathrm{i}t(\lambda_{k}-\lambda_{\ell})}\rangle_{k\neq\ell}+N,
\end{equation}
 we find 
\begin{equation}
\langle\!\langle\mathrm{Tr}(G^{\mathrm{ctn}}_{\alpha}G^{\mathrm{ctn}}_{\beta})\rangle\!\rangle=\frac{\mathrm{Tr}(P_{\alpha}P_{\beta})-\langle\sum_{k}\Gamma^{(\alpha)}_{kk}\Gamma^{(\beta)}_{kk}\rangle}{N(N-1)}\int^{T}_{0}ds\int^{T}_{0}d\tau\tilde{K}(s-\tau)-\frac{\mathrm{Tr}(P_{\alpha}P_{\beta})-N\langle\sum_{k}\Gamma^{(\alpha)}_{kk}\Gamma^{(\beta)}_{kk}\rangle}{N-1}T^{2}.\label{eq:TrGaGb-general}
\end{equation}

In the case where the eigenstates $\{\ket{\lambda_{k}}\}$ is Haar
random, we find 
\begin{align}
\mathbb{E}_{k,\,l\sim\mathrm{Haar}}[\Gamma^{\alpha}_{k\ell}\Gamma^{\beta}_{\ell k} & ]=\mathbb{E}_{k,\,l\sim\mathrm{Haar}}\left[\braket{\lambda_{k}|P_{\alpha}|\lambda_{l}}\braket{\lambda_{l}|P_{\beta}|\lambda_{k}}\right]\nonumber \\
 & =\begin{cases}
\mathrm{Tr}\left\{ \mathbb{E}_{k\sim\mathrm{Haar}}[(\ket{\lambda_{k}}\bra{\lambda_{k}})^{\otimes2}(P_{\alpha}\otimes P_{\beta})\mathbb{F}_{12}]\right\}  & k=\ell\\
\mathrm{Tr}\left\{ \mathbb{E}_{\ell\sim\mathrm{Haar}}\left[\mathbb{E}_{k\sim\mathrm{Haar}|\ell}[\ket{\lambda_{k}}\bra{\lambda_{k}}]P_{\alpha}P_{\beta}\ket{\lambda_{\ell}}\bra{\lambda_{\ell}}\right]\right\}  & k\neq\ell
\end{cases}.
\end{align}
Using Schur-Weyl duality, one can calculate that
\begin{equation}
\mathbb{E}_{k,\,l\sim\mathrm{Haar}}[\Gamma^{\alpha}_{k\ell}\Gamma^{\beta}_{\ell k}]=\begin{cases}
\frac{\mathrm{Tr}(P_{\alpha}P_{\beta})+\mathrm{Tr}P_{\alpha}\mathrm{Tr}P_{\beta}}{N(N+1)} & k=\ell\\
\frac{N\mathrm{Tr}(P_{\alpha}P_{\beta})-\mathrm{Tr}P_{\alpha}\mathrm{Tr}P_{\beta}}{(N-1)N(N+1)} & k\neq\ell
\end{cases}.\label{eq:C-corr}
\end{equation}
Upon substituting Eq.\ (\ref{eq:TrGaGb-general}) into Eq.\ (\ref{eq:C-corr}),
we arrive at Eq.\ (\ref{eq:Tr-GaGb}). 

In the most generic case, upon substituting Eq.\ (\ref{eq:TrGaGb-general})
into Eq.\ (\ref{eq:Q-bar}), we obtain 
\begin{align}
\langle\!\langle\bar{Q}^{\mathrm{ctn}}_{\alpha\beta}(T)\rangle\!\rangle & =\frac{4[\mathrm{Tr}(P_{\alpha}P_{\beta})-\langle\sum_{k}\Gamma^{(\alpha)}_{kk}\Gamma^{(\beta)}_{kk}\rangle]}{(N-1)N(N+1)}\int^{T}_{0}ds\int^{T}_{0}d\tau\tilde{K}(s-\tau)\nonumber \\
 & -\frac{4[N\mathrm{Tr}(P_{\alpha}P_{\beta})+(N-1)\mathrm{Tr}P_{\alpha}\mathrm{Tr}P_{\beta}-N^{2}\langle\sum_{k}\Gamma^{(\alpha)}_{kk}\Gamma^{(\beta)}_{kk}\rangle]}{(N-1)N(N+1)}T^{2}.
\end{align}
Upon identifying 
\begin{align}
A_{\alpha\beta} & \equiv\frac{1}{N}[\mathrm{Tr}(P_{\alpha}P_{\beta})-\langle\sum_{k}\Gamma^{(\alpha)}_{kk}\Gamma^{(\beta)}_{kk}\rangle],\label{eq:A-def}\\
B_{\alpha\beta} & \equiv\frac{1}{N}[N\mathrm{Tr}(P_{\alpha}P_{\beta})+(N-1)\mathrm{Tr}P_{\alpha}\mathrm{Tr}P_{\beta}-N^{2}\langle\sum_{k}\Gamma^{(\alpha)}_{kk}\Gamma^{(\beta)}_{kk}\rangle],\label{eq:B-def}
\end{align}
we can rewrite 
\begin{equation}
\langle\!\langle\bar{Q}^{\mathrm{ctn}}_{\alpha\beta}(T)\rangle\!\rangle=\frac{4}{N^{2}-1}\left[A_{\alpha\beta}\int^{T}_{0}ds\int^{T}_{0}d\tau\tilde{K}(s-\tau)-B_{\alpha\beta}T^{2}\right],\label{eq:Q-detmn}
\end{equation}
which is identical to Eq.\ (\ref{eq:Q-ctn-phy-H}) in the main text,
upon noting $K(t)=\tilde{K}(t)/N^{2}$.

\section{\label{sec:Orthogonal-Groups}Generalization to Orthogonal groups}

In this section, we discuss isospectral twirling and randomized-coupling
protocol with orthogonal groups, which is relevant for realistic Hamiltonian
ensembles that satisfy GOE level statistics. To this end, we need
to be aware that mathematically the commutant of the tensor representation
of the orthogonal group is now the representation of the Brauer algebra\ \citep{brauer1937onalgebras},
a generalization of the permutation algebra, which contains elements
that represents both permutations and contractions\ \citep{collins2006integration,collins2009onsome,collins2022theweingarten}.
More precisely, 
\begin{equation}
\mathscr{T}^{(2)}_{\mathrm{Haar}}(X)\equiv\mathbb{E}_{O\sim\mathrm{Harr}}[O^{\otimes2}X(O^{T})^{\otimes2}]=c_{\mathbb{I}}(X)\mathbb{I}+c_{\mathbb{F}}(X)\mathbb{F}+c_{\mathbb{E}}(X)\mathbb{E},\label{eq:T2-Haar-GOE}
\end{equation}
where $\mathbb{I}$ and $\mathbb{F}$ are the identity and swap operator
as in the case of unitary groups, $\mathbb{E}=\sum_{ij}\ket{i}\bra{j}\otimes\ket{i}\bra{j}$,
the coefficients can be determined from by solving a set of equations
equation 
\begin{equation}
\begin{bmatrix}\mathrm{Tr}(\mathbb{I}) & \mathrm{Tr}(\mathbb{F}) & \mathrm{Tr}(\mathbb{E})\\
\mathrm{Tr}(\mathbb{F}) & \mathrm{Tr}(\mathbb{F}^{2}) & \mathrm{Tr}(\mathbb{E}\mathbb{F})\\
\mathrm{Tr}(\mathbb{E}) & \mathrm{Tr}(\mathbb{F}\mathbb{E}) & \mathrm{Tr}(\mathbb{E}^{2})
\end{bmatrix}\begin{bmatrix}c_{\mathbb{I}}(X)\\
c_{\mathbb{F}}(X)\\
c_{\mathbb{E}}(X)
\end{bmatrix}=\begin{bmatrix}\mathrm{Tr}X\\
\mathrm{Tr}(\mathbb{F}X)\\
\mathrm{Tr}(\mathbb{E}X)
\end{bmatrix}.
\end{equation}
Now with Eq.\ (\ref{eq:T2-Haar-GOE}), one can easily calculate the
average over the initial Haar random real state vectors, which yields,
for real symmetric operators $A$ and $B$, 
\begin{equation}
\mathbb{E}_{\ket{\psi_{0}}\sim\mathrm{Haar}}(\mathrm{Tr}\ket{\psi_{0}}\bra{\psi_{0}}AB)=\frac{1}{N}\mathrm{Tr}(AB),
\end{equation}
which remains unchanged compared to the case of Haar random complex
vectors. However, for the two-copy twirling, there is a difference
up to a factor of $2$, i.e.,
\begin{equation}
\mathbb{E}_{\ket{\psi_{0}}\sim\mathrm{Haar}}\mathrm{Tr}[\ket{\psi_{0}}\bra{\psi_{0}}^{\otimes2}A\otimes B]=\frac{\mathrm{Tr}A\mathrm{Tr}B+2\mathrm{Tr}(AB)}{N(N+2)}.
\end{equation}
Then we find 
\begin{equation}
\overline{Q}_{\alpha\beta}=\frac{4[N\mathrm{Tr}(G_{\alpha}G_{\beta})-\mathrm{Tr}G_{\alpha}\mathrm{Tr}G_{\beta}]}{N(N+2)}.\label{eq:Q-bar-GOE}
\end{equation}
Therefore for the quench protocol
\begin{equation}
\overline{Q}^{\mathrm{qch}}_{\alpha\beta}(T)=4C^{(\mathrm{R})}_{\alpha\beta}T^{2},
\end{equation}
where 
\begin{equation}
C^{(\mathrm{R})}_{\alpha\beta}\equiv\frac{N\mathrm{Tr}(P_{\alpha}P_{\beta})-\mathrm{Tr}P_{\alpha}\mathrm{Tr}P_{\beta}}{N(N+2)},
\end{equation}
and 
\begin{equation}
\langle\!\langle\overline{Q}^{\mathrm{ctn}}_{\alpha\beta}(T)\rangle\!\rangle=\frac{4\left(N\int^{T}_{0}ds\int^{T}_{0}d\tau\mathrm{Tr}[\mathscr{T}^{(1)}_{\mathrm{ISO}}(P_{\alpha},s-\tau)P_{\beta}]-T^{2}\mathrm{Tr}P_{\alpha}\mathrm{Tr}P_{\beta}\right)}{N(N+2)}.\label{eq:Q-ctn-ISO-GOE}
\end{equation}

For isospectral twirling with the Haar random orthogonal matrices,
we consider $H=U\Lambda U^{\dagger}$ and $U$ is a Haar random orthogonal
matrix, one can show that for a real symmetric operator $X$
\begin{equation}
\mathscr{T}^{(1)}_{\mathrm{ISO}}(X)=\mathbb{E}_{H\sim\mathrm{ISO}}[\mathrm{e}^{\mathrm{i}Ht}X\mathrm{e}^{-\mathrm{i}Ht}]=q(t)X+[1-q(t)]\frac{\mathrm{Tr}X}{N}\mathbb{I},\label{eq:T1-GOE}
\end{equation}
where 
\begin{equation}
q(t)=\frac{\tilde{K}(t)+N-2}{(N-1)(N+2)}.
\end{equation}
Note that we assume $P_{\alpha}$ and $P_{\beta}$ are real and symmetric.
We then substitute Eq.~(\ref{eq:T1-GOE}) into Eq.~(\ref{eq:Q-ctn-ISO-GOE})
and obtain
\begin{align}
\langle\!\langle\overline{Q}^{\mathrm{ctn}}_{\alpha\beta}(T)\rangle\!\rangle & =\frac{4C^{(\mathrm{R})}_{\alpha\beta}}{(N-1)(N+2)}\left[2\int^{T}_{0}(T-t)\tilde{K}(t)dt+(N-2)T^{2}\right]\nonumber \\
 & =\frac{4C^{(\mathrm{R})}_{\alpha\beta}}{(N-1)(N+2)}\left[2N^{2}\int^{T}_{0}(T-t)K(t)dt+(N-2)T^{2}\right].\label{eq:Q-ctn-GOE}
\end{align}
Upon taking the second derivative with respect to $T$, we obtain
\begin{equation}
\frac{d^{2}\langle\!\langle\overline{Q}^{\mathrm{ctn}}_{\alpha\beta}(T)\rangle\!\rangle}{dT^{2}}=\frac{4C^{(\mathrm{R})}_{\alpha\beta}\left[2\tilde{K}(T)+N-2\right]}{(N-1)(N+2)}=\frac{4C^{(\mathrm{R})}_{\alpha\beta}\left[2N^{2}K(T)+N-2\right]}{(N-1)(N+2)}.
\end{equation}
For the deterministic protocol, Eq.~(\ref{eq:TrGaGb-general}) remains
valid. Upon substituting it into Eq.~(\ref{eq:Q-bar-GOE}), we find
\begin{align}
\langle\!\langle\bar{Q}^{\mathrm{ctn}}_{\alpha\beta}(T)\rangle\!\rangle & =\frac{4}{(N-1)(N+2)}\left[A_{\alpha\beta}\int^{T}_{0}ds\int^{T}_{0}d\tau\tilde{K}(s-\tau)-B_{\alpha\beta}T^{2}\right]\nonumber \\
 & =\frac{4}{(N-1)(N+2)}\left[N^{2}A_{\alpha\beta}\int^{T}_{0}ds\int^{T}_{0}d\tau K(s-\tau)-B_{\alpha\beta}T^{2}\right].\label{eq:Q-ctn-phys-GOE}
\end{align}
where $A_{\alpha\beta}$ and $B_{\alpha\beta}$ are still defined
as in Eqs.~(\ref{eq:A-def}, \ref{eq:B-def}).

\section{Proof the self-average and typicality for integral SFF and SFF at
late and early times}

\subsection{Late-time fluctuations for the integral SFF and SFF}

Now we analyze the fluctuations of the integral of SFF and SFF at
large timescales. Our procedure is as follow: (i) Eigenvalue representation:
We rewrite $I_{H}(T)$, $I^{2}_{H}(T)$, $K_{H}(T)$ and $K^{2}_{H}(T)$
in terms of the sum of each contribution of the eigenvalues of $H$
(ii) Integral representation: We decompose the summation properly
so that upon taking the ensemble averages, each sum can be represented
by an integral involving the $n-$point spectral correlation function
defined as 
\begin{equation}
\rho^{(n)}_{N}(\lambda_{1},\,\lambda_{2},\,\cdots\lambda_{n})\equiv\int d\lambda_{n+1}\cdots d\lambda_{N}P_{N}(\lambda_{1},\,\lambda_{2},\,\cdots,\lambda_{N}),
\end{equation}
where $\int d\lambda\equiv\int^{\lambda_{\max}}_{\lambda_{\min}}d\lambda$
with $\lambda_{\min}\ge-\infty$ and $\lambda_{\max}\leq\infty$.
(iii) Late-time limit: While considering $N\gg1$ but fixed, take
the long-time limit and identify the non-vanishing terms. 

More specifically, let us now rewrite $I_{H}(T)$ defined in Eq.~(\ref{eq:K-I-def})
as
\begin{align}
I_{H}(T) & =\frac{1}{N^{2}}\sum_{k,\ell}\int^{1}_{0}(1-u)\mathrm{e}^{\mathrm{i}(\lambda_{k}-\lambda_{\ell})Tu}du\nonumber \\
 & =\frac{1}{2N}+\frac{1}{N^{2}T^{2}}\sum_{k,\ell,\,k\neq\ell}\frac{1+\mathrm{i}(\lambda_{k}-\lambda_{\ell})T-\mathrm{e}^{\mathrm{i}(\lambda_{k}-\lambda_{\ell})T}}{(\lambda_{k}-\lambda_{\ell})^{2}}\nonumber \\
 & =\frac{1}{2N}+\frac{\mathcal{C}^{(2)}_{H}(T)}{T^{2}},\label{eq:I-H-T}
\end{align}
where 
\begin{equation}
\mathcal{C}^{(2)}_{H}(T)\equiv\frac{1}{N^{2}}\sum_{k,\ell,\,k\neq\ell}\frac{1-\mathrm{e}^{\mathrm{i}(\lambda_{k}-\lambda_{\ell})T}}{(\lambda_{k}-\lambda_{\ell})^{2}}=\frac{2}{N^{2}}\sum_{k,\ell,\,k\neq\ell}\frac{\sin^{2}[(\lambda_{k}-\lambda_{\ell})T/2]}{(\lambda_{k}-\lambda_{\ell})^{2}}.
\end{equation}
Furthermore, it can be readily calculated that
\begin{align}
I^{2}_{H}(T) & =\left(\frac{1}{2N}+\frac{\mathcal{C}^{(2)}_{H}(T)}{T^{2}}\right)^{2}=\frac{1}{4N^{2}}+\frac{\mathcal{C}^{(2)}_{H}(T)}{NT^{2}}+\frac{[\mathcal{C}^{(2)}_{H}(T)]^{2}}{T^{4}}\nonumber \\
 & =\frac{1}{4N^{2}}+\frac{2}{N^{3}T^{2}}\sum_{k,\ell,\,k\neq\ell}\frac{\sin^{2}[(\lambda_{k}-\lambda_{\ell})T/2]}{(\lambda_{k}-\lambda_{\ell})^{2}}+\frac{4}{N^{4}T^{4}}\sum_{k,\ell,\,k\neq\ell}\sum_{m,n,\,m\neq n}\frac{\sin^{2}[(\lambda_{k}-\lambda_{\ell})T/2]\sin^{2}[(\lambda_{m}-\lambda_{n})T/2]}{(\lambda_{k}-\lambda_{\ell})^{2}(\lambda_{m}-\lambda_{n})^{2}}.
\end{align}
Using the fact that 
\begin{equation}
\sum_{\substack{k,\ell\\
k\neq\ell
}
}\sum_{\substack{m,n\\
m\neq n
}
}F_{k\ell mn}=\sum_{\substack{k,\ell,m,n\\
\text{distinct}
}
}F_{k\ell mn}+\sum_{\substack{k,\ell,n\\
\text{distinct}
}
}F_{k\ell kn}+\sum_{\substack{k,\ell,m\\
\text{distinct}
}
}F_{k\ell mk}+\sum_{\substack{k,\ell,n\\
\text{distinct}
}
}F_{k\ell\ell n}+\sum_{\substack{k,\ell,m\\
\text{distinct}
}
}F_{k\ell m\ell}+\sum_{k\neq\ell}F_{k\ell k\ell}+\sum_{k\neq\ell}F_{k\ell\ell k},\label{eq:sum-decomp}
\end{equation}
one can decompose the quadruple sum as
\begin{align}
 & \sum_{k,\ell,\,k\neq\ell}\sum_{m,n,\,m\neq n}\frac{\sin^{2}[(\lambda_{k}-\lambda_{\ell})T/2]\sin^{2}[(\lambda_{m}-\lambda_{n})T/2]}{(\lambda_{k}-\lambda_{\ell})^{2}(\lambda_{m}-\lambda_{n})^{2}}\nonumber \\
= & \sum_{\substack{k,\ell,m,n\\
\text{distinct}
}
}\frac{\sin^{2}[(\lambda_{k}-\lambda_{\ell})T/2]\sin^{2}[(\lambda_{m}-\lambda_{n})T/2]}{(\lambda_{k}-\lambda_{\ell})^{2}(\lambda_{m}-\lambda_{n})^{2}}+4\sum_{\substack{k,\ell,n\\
\text{distinct}
}
}\frac{\sin^{2}[(\lambda_{k}-\lambda_{\ell})T/2]\sin^{2}[(\lambda_{k}-\lambda_{n})T/2]}{(\lambda_{k}-\lambda_{\ell})^{2}(\lambda_{k}-\lambda_{n})^{2}}+2\sum_{k\neq\ell}\frac{\sin^{4}[(\lambda_{k}-\lambda_{\ell})T/2]}{(\lambda_{k}-\lambda_{\ell})^{4}}.
\end{align}
After taking the ensemble average rule 
\begin{align}
\sum_{\substack{k,\ell,m,n\\
\text{all distinct}
}
}\to & N(N-1)(N-2)(N-3)\int d\lambda_{1}d\lambda_{2}d\lambda_{3}d\lambda_{4}\rho^{(4)}_{N}(\lambda_{1},\lambda_{2},\lambda_{3},\,\lambda_{4}),\\
\sum_{\substack{k,\ell,n\\
\text{all distinct}
}
}\to & N(N-1)(N-2)\int d\lambda_{1}d\lambda_{2}d\lambda_{3}\rho^{(3)}_{N}(\lambda_{1},\lambda_{2},\lambda_{3}),\\
\sum_{k\neq\ell}\to & N(N-1)\int d\lambda_{1}d\lambda_{2}\rho^{(2)}_{N}(\lambda_{1},\lambda_{2}),
\end{align}
we find
\begin{equation}
I(T)\equiv\langle\!\langle I_{H}(T)\rangle\!\rangle=\frac{1}{2N}+\frac{2}{T^{2}}\left(1-\frac{1}{N}\right)\int d\lambda_{1}d\lambda_{2}\rho^{(2)}_{N}(\lambda_{1},\lambda_{2})\frac{\sin^{2}[(\lambda_{1}-\lambda_{2})T/2]}{(\lambda_{1}-\lambda_{2})^{2}},\label{eq:IH-ave}
\end{equation}
\begin{align}
\langle\!\langle I^{2}_{H}(T)\rangle\!\rangle & =\frac{1}{4N^{2}}+\frac{N-1}{N^{2}T^{2}}\int d\lambda_{1}d\lambda_{2}\rho^{(2)}_{N}(\lambda_{1},\lambda_{2})\frac{\sin^{2}[(\lambda_{1}-\lambda_{2})T/2]}{(\lambda_{1}-\lambda_{2})^{2}}\nonumber \\
 & +\frac{2(N-1)}{N^{3}T^{4}}\int d\lambda_{1}d\lambda_{2}\rho^{(2)}_{N}(\lambda_{1},\lambda_{2})\frac{\sin^{4}[(\lambda_{1}-\lambda_{2})T/2]}{(\lambda_{1}-\lambda_{2})^{4}}\nonumber \\
 & +\frac{4(N-1)(N-2)}{N^{3}T^{4}}\int d\lambda_{1}d\lambda_{2}d\lambda_{3}\rho^{(3)}_{N}(\lambda_{1},\lambda_{2},\lambda_{3})\frac{\sin^{2}[(\lambda_{1}-\lambda_{2})T/2]\sin^{2}[(\lambda_{1}-\lambda_{3})T/2]}{(\lambda_{1}-\lambda_{2})^{2}(\lambda_{1}-\lambda_{3})^{2}}\nonumber \\
 & +\frac{4(N-1)(N-2)(N-3)}{N^{3}T^{4}}\int d\lambda_{1}d\lambda_{2}d\lambda_{3}d\lambda_{4}\rho^{(4)}_{N}(\lambda_{1},\lambda_{2},\lambda_{3},\,\lambda_{4})\frac{\sin^{2}[(\lambda_{1}-\lambda_{2})T/2]\sin^{2}[(\lambda_{3}-\lambda_{4})T/2]}{(\lambda_{1}-\lambda_{2})^{2}(\lambda_{3}-\lambda_{4})^{2}}.\label{eq:I2H-ave}
\end{align}
Following similar procedure, it can be readily calculate that 
\begin{equation}
K_{H}(T)=\frac{1}{N^{2}}\sum_{k,\ell}\mathrm{e}^{\mathrm{i}(\lambda_{k}-\lambda_{\ell})T}=\frac{1}{N}+\frac{1}{N^{2}}\sum_{k,\ell,\,k\neq\ell}\cos[(\lambda_{k}-\lambda_{\ell})T],\label{eq:K-H-T}
\end{equation}
and 
\begin{equation}
K^{2}_{H}(T)=\frac{1}{N^{2}}+\frac{2}{N^{3}}\sum_{k,\ell,\,k\neq\ell}\cos[(\lambda_{k}-\lambda_{\ell})T]+\frac{1}{N^{4}}\sum_{k,\ell,\,k\neq\ell}\sum_{m,n,\,m\neq n}\cos[(\lambda_{k}-\lambda_{\ell})T]\cos[(\lambda_{m}-\lambda_{n})T].
\end{equation}
According to Eq.\ (\ref{eq:sum-decomp}), we find
\begin{multline}
\sum_{k,\ell,\,k\neq\ell}\sum_{m,n,\,m\neq n}\cos[(\lambda_{k}-\lambda_{\ell})T]\cos[(\lambda_{m}-\lambda_{n})T]=\sum_{\substack{k,\ell,m,n\\
\mathrm{distinct}
}
}\cos[(\lambda_{k}-\lambda_{\ell})T]\cos[(\lambda_{m}-\lambda_{n})T]\\
+4\sum_{\substack{k,l,m\\
\mathrm{distinct}
}
}\cos[(\lambda_{k}-\lambda_{\ell})T]\cos[(\lambda_{m}-\lambda_{k})T]+2\sum_{k\neq l}\cos^{2}[(\lambda_{k}-\lambda_{\ell})T].
\end{multline}
Therefore 
\begin{equation}
K(T)\equiv\langle\!\langle K_{H}(T)\rangle\!\rangle=\frac{1}{N}+\left(1-\frac{1}{N}\right)\int d\lambda_{1}d\lambda_{2}\rho^{(2)}_{N}(\lambda_{1},\lambda_{2})\cos[(\lambda_{1}-\lambda_{2})T],\label{eq:KH-ave}
\end{equation}
and 
\begin{align}
\langle\!\langle K^{2}_{H}(T)\rangle\!\rangle & =\frac{1}{N^{2}}+\frac{2(N-1)}{N^{2}}\int d\lambda_{1}d\lambda_{2}\rho^{(2)}_{N}(\lambda_{1},\lambda_{2})\cos[(\lambda_{1}-\lambda_{2})T]+\frac{2(N-1)}{N^{3}}\int d\lambda_{1}d\lambda_{2}\rho^{(2)}_{N}(\lambda_{1},\lambda_{2})\cos^{2}[(\lambda_{1}-\lambda_{2})T]\nonumber \\
 & +\frac{4(N-1)(N-2)}{N^{3}}\int d\lambda_{1}d\lambda_{2}d\lambda_{3}\rho^{(3)}_{N}(\lambda_{1},\lambda_{2},\lambda_{3})\cos[(\lambda_{1}-\lambda_{2})T]\cos[(\lambda_{1}-\lambda_{3})T]\nonumber \\
 & +\frac{(N-1)(N-2)(N-3)}{N^{3}}\int d\lambda_{1}d\lambda_{2}d\lambda_{3}d\lambda_{4}\rho^{(4)}_{N}(\lambda_{1},\lambda_{2},\lambda_{3},\,\lambda_{4})\cos[(\lambda_{1}-\lambda_{2})T]\cos[(\lambda_{3}-\lambda_{4})T].\label{eq:K2H-ave}
\end{align}
Now let us analyze the asymptotic behavior of Eqs.\ (\ref{eq:IH-ave},\ \ref{eq:I2H-ave},\ \ref{eq:KH-ave},\ \ref{eq:K2H-ave})
when $T$ becomes large. Let us introduce 
\begin{equation}
\begin{aligned}\sigma^{(2)}_{N}(\lambda) & =\int d\mu\,\rho^{(2)}_{N}\!\left(\mu+\frac{\lambda}{2},\mu-\frac{\lambda}{2}\right),\\
\sigma^{(3)}_{N}(\lambda_{1},\lambda_{2}) & =\int d\mu\,\rho^{(3)}_{N}(\mu,\mu-\lambda_{1},\mu-\lambda_{2}),\\
\sigma^{(4)}_{N}(\lambda_{1},\lambda_{2}) & =\int d\mu_{1}\,d\mu_{2}\,\rho_{4}\!\left(\mu_{1}+\frac{\lambda_{1}}{2},\mu_{1}-\frac{\lambda_{1}}{2},\mu_{2}+\frac{\lambda_{2}}{2},\mu_{2}-\frac{\lambda_{2}}{2}\right).
\end{aligned}
\end{equation}
Then we find 
\begin{equation}
\int d\lambda_{1}d\lambda_{2}\rho^{(2)}_{N}(\lambda_{1},\lambda_{2})\frac{\sin^{2}[(\lambda_{1}-\lambda_{2})T/2]}{(\lambda_{1}-\lambda_{2})^{2}}\to\frac{\pi T}{2}\int d\lambda_{1}d\lambda_{2}\rho^{(2)}_{N}(\lambda_{1},\lambda_{2})\delta(\lambda_{1}-\lambda_{2})=\frac{\pi}{2}T\sigma^{(2)}_{N}(0)+O(1),
\end{equation}
\begin{equation}
\int d\lambda_{1}d\lambda_{2}\rho^{(2)}_{N}(\lambda_{1},\lambda_{2})\frac{\sin^{4}[(\lambda_{1}-\lambda_{2})T/2]}{(\lambda_{1}-\lambda_{2})^{4}}\to\frac{\pi}{12}T^{3}\sigma^{(2)}_{N}(0)+O(1),
\end{equation}
where we have used the fact that 
\begin{equation}
\frac{\sin^{2}(T\lambda)}{\lambda^{2}}\to\frac{\pi}{2}T\delta(0),\quad\frac{\sin^{4}(\lambda T)}{\lambda^{4}}\to\frac{\pi}{12}T^{3}\delta(0)
\end{equation}
as $T\to\infty$ since $\int^{\infty}_{-\infty}\frac{\sin^{2}(T\lambda/2)}{\lambda^{2}}=\frac{\pi}{2}T$
and $\int^{\infty}_{-\infty}\frac{\sin^{4}(T\lambda/2)}{\lambda^{4}}=\frac{\pi}{12}T^{3}$.
Similarly, we find where 
\begin{align}
\int d\lambda_{1}d\lambda_{2}d\lambda_{3}\rho^{(3)}_{N}(\lambda_{1},\lambda_{2},\lambda_{3})\frac{\sin^{2}[(\lambda_{1}-\lambda_{2})T/2]\sin^{2}[(\lambda_{1}-\lambda_{3})T/2]}{(\lambda_{1}-\lambda_{2})^{2}(\lambda_{1}-\lambda_{3})^{2}} & \to\frac{\pi^{2}}{4}T^{2}\sigma^{(3)}_{N}(0,0)+O(1),\\
\int d\lambda_{1}d\lambda_{2}d\lambda_{3}d\lambda_{4}\rho^{(4)}_{N}(\lambda_{1},\lambda_{2},\lambda_{3},\,\lambda_{4})\frac{\sin^{2}[(\lambda_{1}-\lambda_{2})T/2]\sin^{2}[(\lambda_{3}-\lambda_{4})T/2]}{(\lambda_{1}-\lambda_{2})^{2}(\lambda_{3}-\lambda_{4})^{2}} & \to\frac{\pi^{2}}{4}T^{2}\sigma^{(4)}_{N}(0,0)+O(1).
\end{align}
On the other hand, according to the Riemman-Lebesgue theorem, 
\begin{equation}
\int f(\lambda)\mathrm{e}^{\mathrm{i}\lambda T}=\sum^{m}_{k=1}\frac{(-1)^{k}}{(\mathrm{i}T)^{k+1}}f^{(k)}(\lambda)|^{\lambda=\lambda_{\max}}_{\lambda=\lambda_{\min}}+O\left(\frac{1}{T^{m+1}}\right),
\end{equation}
 one can easily find 
\begin{align}
\int d\lambda_{1}d\lambda_{2}\rho^{(2)}_{N}(\lambda_{1},\lambda_{2})\cos[(\lambda_{1}-\lambda_{2})T] & \sim O\left(\frac{1}{T}\right),\\
\int d\lambda_{1}d\lambda_{2}\rho^{(2)}_{N}(\lambda_{1},\lambda_{2})\cos^{2}[(\lambda_{1}-\lambda_{2})T] & \sim\frac{1}{2}+O\left(\frac{1}{T}\right),\\
\int d\lambda_{1}d\lambda_{2}d\lambda_{3}\rho^{(3)}_{N}(\lambda_{1},\lambda_{2},\lambda_{3})\cos[(\lambda_{1}-\lambda_{2})T]\cos[(\lambda_{1}-\lambda_{3})T] & \sim O\left(\frac{1}{T}\right),\\
\int d\lambda_{1}d\lambda_{2}d\lambda_{3}d\lambda_{4}\rho^{(4)}_{N}(\lambda_{1},\lambda_{2},\lambda_{3},\,\lambda_{4})\cos[(\lambda_{1}-\lambda_{2})T]\cos[(\lambda_{3}-\lambda_{4})T] & \sim O\left(\frac{1}{T}\right).
\end{align}
Therefore in the limit of $T\gg N\gg1$, we obtain
\begin{align}
\langle\!\langle I_{H}(T)\rangle\!\rangle & \to\frac{1}{2N}+\frac{\pi}{2T}\sigma^{(2)}_{N}(0)+O(1),\\
\langle\!\langle K_{H}(T)\rangle\!\rangle & \to\frac{1}{N}+O\left(\frac{1}{T}\right),\\
\langle\!\langle I^{2}_{H}(T)\rangle\!\rangle & \to\frac{1}{4N^{2}}+\frac{\pi}{2NT}\sigma^{(2)}_{N}(0)+\frac{\pi}{6N^{2}T}\sigma^{(2)}_{N}(0)+\frac{\pi^{2}}{NT^{2}}\sigma^{(3)}_{N}(0,0)+\frac{\pi^{2}}{4T^{2}}\sigma^{(4)}_{N}(0,0),\\
\langle\!\langle K^{2}_{H}(T)\rangle\!\rangle & \to\frac{2}{N^{2}}+O\left(\frac{1}{T}\right).
\end{align}
It is then clear that at late times $T\gg N$, for the SFF, 
\begin{equation}
\mathrm{RV}_{N}[K_{H}(T)]\equiv\frac{\langle\!\langle K^{2}_{H}(T)\rangle\!\rangle}{\langle\!\langle K_{H}(T)\rangle\!\rangle^{2}}-1\to1,
\end{equation}
which implies that the SFF is not self-averaging, while for the integral
SFF, we find 
\begin{equation}
\mathrm{RV}_{N}[I_{H}(T)]=\frac{\langle\!\langle I^{2}_{H}(T)\rangle\!\rangle}{\langle\!\langle I_{H}(T)\rangle\!\rangle^{2}}-1\to0.
\end{equation}

\subsection{Concentration inequality}

Let us first introduce briefly the background of concentration inequality
in order to make the Supplemental Material self-contained. The details
can be found in Ref.\ \citep{anderson2010anintroduction}. A probability
measure $\mu$ on a Riemannian manifold $M$ of dimension $m$ (e.g.,
$M=\mathbb{R}^{m}$) is said to satisfy the logarithmic Sobolev inequality
(LSI), if any non-negative function on the manifold $g:M\to\mathbb{R}$,
it satisfies 
\begin{equation}
\int g(x)\log\frac{g(x)}{\mathbb{E}_{\mu}g}d\mu(x)\leq\frac{c_{\mathrm{LSI}}}{2}\int\frac{\|\nabla g(x)\|^{2}}{g(x)}d\mu(x).\label{eq:LSI-def1}
\end{equation}
Without loss of generality, one can further normalize $g$ such that
$\int g(x)d\mu(x)=1$ or then Eq.\ (\ref{eq:LSI-def1}) become 
\begin{align}
\int g(x)\log g(x)d\mu(x) & \leq\frac{c_{\mathrm{LSI}}}{2}\int\frac{\|\nabla g(x)\|^{2}}{g(x)}d\mu(x).\label{eq:LSI-normalized}
\end{align}
Then the l.h.s. is just the Shannon entropy of the distribution $g(x)$
while the r.h.s is classical Fisher information contained in the probability
distribution $g(x)$. Similarly, the l.h.s. of Eq.\ (\ref{eq:LSI-def1})
can be also interpreted as the relative entropy between the unnormalized
distribution $g(x)$ and the uniform distribution $\mathbb{E}_{\mu}g$
while the r.h.s. can be interpreted as the classical Fisher information
of the unnormalized probability distribution $g(x)$

Upon setting $g=f^{2}$, one can obtain an alternative form of the
LSI,
\begin{equation}
\int f^{2}(x)\log\frac{f^{2}(x)}{\mathbb{E}_{\mu}f^{2}}d\mu(x)\leq c_{\mathrm{LSI}}\int\|\nabla f(x)\|^{2}d\mu(x).\label{eq:LSI-def2}
\end{equation}
In practice, one can employ the following lemmas to identify a given
probability to satisfy LSI and the corresponding coefficient $c_{\mathrm{LSI}}$. 
\begin{lem}
\textup{\label{lem:Bochner-Bakry-Emergy}Consider a function $\Phi:M\to\mathbb{R}$,
the probability measure 
\begin{equation}
d\mu_{\Phi}\equiv\frac{1}{Z}\mathrm{e}^{-\Phi(x)}d\mathrm{vol}(x),\label{eq:dmu-Phi}
\end{equation}
where $d\mathrm{vol}(x)=\sqrt{\det g(x)}dx_{1}\wedge dx_{2}\wedge\cdots\wedge dx_{m}$,
satisfies LSI with constant $c_{\mathrm{LSI}}$ if the following inequality
holds
\begin{equation}
[\mathrm{Ric}+\mathrm{Hess}(\Phi)](\mathcal{X},\mathcal{X})\ge\frac{1}{c_{\mathrm{LSI}}}g_{x}(\mathcal{X},\mathcal{X})\label{eq:BBE}
\end{equation}
for any tangent vector $\mathcal{X}\in T_{x}M$at any point $x\in M$,
where $\mathrm{Ric}(\Phi)$ is the Ricci tensor for for the Riemannian
manifold $M$ and $\mathrm{Hess}(\Phi)$ is the Hessian of the function
$\Phi$.}
\end{lem}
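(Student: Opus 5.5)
The plan is the Bakry--Émery route via the $\Gamma_2$ calculus. I will show that the curvature condition \eqref{eq:BBE} gives the LSI \eqref{eq:LSI-def2} with constant $c_{\mathrm{LSI}}$. Write $\kappa=1/c_{\mathrm{LSI}}$. Introduce the weighted Laplacian $L=\Delta-\langle\nabla\Phi,\nabla\cdot\rangle$, which is symmetric with respect to $d\mu_{\Phi}$ in \eqref{eq:dmu-Phi}. Let $P_t=\mathrm{e}^{tL}$ be the associated Markov semigroup, with invariant measure $\mu_\Phi$ and carré du champ $\Gamma(f)=\|\nabla f\|^{2}$.

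\emph{Step 1 (Bochner).} Compute the iterated carré du champ $\Gamma_{2}(f)=\tfrac12 L\Gamma(f)-\langle\nabla f,\nabla Lf\rangle$. The Bochner--Weitzenböck formula gives
\begin{equation*}
\Gamma_{2}(f)=\|\mathrm{Hess} f\|_{HS}^{2}+[\mathrm{Ric}+\mathrm{Hess}(\Phi)](\nabla f,\nabla f).
\end{equation*}
Together with \eqref{eq:BBE}, this yields the curvature--dimension condition $\Gamma_{2}(f)\ge\kappa\,\Gamma(f)$.

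\emph{Step 2 (gradient commutation).} I will upgrade this to the strong gradient bound $\sqrt{\Gamma(P_t f)}\le \mathrm{e}^{-\kappa t}P_t\sqrt{\Gamma(f)}$. Fix $t$ and set $\psi(s)=\mathrm{e}^{-2\kappa s}P_s\big(\Gamma(P_{t-s}f)\big)$. Differentiating gives $\psi'(s)=2\mathrm{e}^{-2\kappa s}P_s\big(\Gamma_2-\kappa\Gamma\big)(P_{t-s}f)\ge0$. This proves the weaker bound $\Gamma(P_tf)\le \mathrm{e}^{-2\kappa t}P_t\Gamma(f)$. The square-root version follows by running the same argument for $\sqrt{\Gamma}$, using the refined inequality $\Gamma_2(f)\ge\kappa\Gamma(f)+\Gamma(\sqrt{\Gamma(f)})$. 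That inequality comes from the Hessian term through $\|\mathrm{Hess}f\|^2\|\nabla f\|^2\ge\|\mathrm{Hess}f(\nabla f)\|^2$.

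\emph{Step 3 (entropy interpolation).} Take $g>0$ smooth and bounded away from $0$, normalized so that $\mathbb{E}_{\mu}g=1$. Define $H(t)=\int P_tg\log P_tg\,d\mu_\Phi$. Then $H(0)$ is the left side of \eqref{eq:LSI-def1}, and $H(\infty)=0$ by ergodicity. Integrating by parts gives $-H'(t)=\int\Gamma(P_tg)/P_tg\,d\mu_\Phi$. Step 2 and Cauchy--Schwarz, in the form $(P_t\sqrt{\Gamma g})^{2}\le P_t(\Gamma g/g)\,P_tg$, give $-H'(t)\le \mathrm{e}^{-2\kappa t}\int\Gamma(g)/g\,d\mu_\Phi$. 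Integrating from $0$ to $\infty$ gives the factor $1/(2\kappa)=c_{\mathrm{LSI}}/2$, which is exactly \eqref{eq:LSI-def1}. The case of general nonnegative $g$ follows by approximation, and substituting $g=f^{2}$ gives \eqref{eq:LSI-def2}.

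\emph{Main obstacle.} The hard part is analytic justification rather than algebra. One must show that $L$ is essentially self-adjoint and that $P_t$ preserves a core of smooth functions, so that differentiating under the integral is allowed. One must also establish the ergodic limit $P_tg\to\mathbb{E}_\mu g$. On noncompact $M=\mathbb{R}^m$, uniform convexity of $\Phi$ (the case relevant to the eigenvalue measures here) gives completeness and the needed decay. I would handle this by first working with compactly supported smooth $g$ and then passing to the limit, or simply invoke the standard treatment in Ref.~\citep{anderson2010anintroduction}.
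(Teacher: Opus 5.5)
Your Steps 1--3 are the standard Bakry--\'Emery $\Gamma_2$/semigroup proof, and they are correct. The Bochner identity, the commutation bound $\sqrt{\Gamma(P_tf)}\le\mathrm{e}^{-\kappa t}P_t\sqrt{\Gamma(f)}$, and the entropy interpolation together give \eqref{eq:LSI-def1} with prefactor $1/(2\kappa)=c_{\mathrm{LSI}}/2$. The paper gives no proof of this lemma; it only invokes the classical criterion from Ref.~\citep{anderson2010anintroduction}, so there is no different argument in the paper to compare against.

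There is one concrete error, in your opening sentence and your final step. Substituting $g=f^{2}$ into \eqref{eq:LSI-def1} gives $\|\nabla g\|^{2}/g=4\|\nabla f\|^{2}$, hence
\begin{equation*}
\int f^{2}\log\frac{f^{2}}{\mathbb{E}_{\mu}f^{2}}\,d\mu\le2c_{\mathrm{LSI}}\int\|\nabla f\|^{2}d\mu .
\end{equation*}
This is not \eqref{eq:LSI-def2}, which has $c_{\mathrm{LSI}}$ on the right. No argument can recover this factor of $2$, because \eqref{eq:LSI-def2} as printed is false under \eqref{eq:BBE}. Take the standard Gaussian on $\mathbb{R}$, so $\Phi=x^{2}/2$ and $c_{\mathrm{LSI}}=1$, and let $f=\mathrm{e}^{ax/2}$. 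Then $\int f^{2}\log(f^{2}/\mathbb{E}_{\mu}f^{2})\,d\mu=\frac{a^{2}}{2}\mathrm{e}^{a^{2}/2}=2\int|f'|^{2}d\mu$, which exceeds $\int|f'|^{2}d\mu$ whenever $a\neq0$. So the paper's \eqref{eq:LSI-def2} contains a typo. The consistent normalization is \eqref{eq:LSI-def1}, or equivalently the $2c_{\mathrm{LSI}}$ form displayed above. This is the convention of Ref.~\citep{anderson2010anintroduction}, and it is the one that produces the bound $2\mathrm{e}^{-\varepsilon^{2}/(2c_{\mathrm{LSI}}L_{f}^{2})}$ used later in the paper. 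Make \eqref{eq:LSI-def1} your target and drop the claim about \eqref{eq:LSI-def2}.

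A smaller point concerns the hypotheses on $M$. Completeness and connectedness are omitted from the statement, but your semigroup argument needs them: for conservativeness of $P_t$, for the existence of a core, and for $H(\infty)=0$. They do not follow from convexity of $\Phi$. For example, let $M$ be the disjoint union of two open balls $B_{1},B_{2}\subset\mathbb{R}^{m}$ with $\Phi=|x|^{2}/2$. Then \eqref{eq:BBE} holds, yet $g=1+\mathbf{1}_{B_{1}}$ has positive entropy and $\nabla g=0$, so no LSI holds. For $M=\mathbb{R}^{m}$ with the Euclidean metric, which is the case the paper actually uses, these hypotheses hold automatically, so you should state them as assumptions rather than derive them.
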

When $M=\mathbb{R}^{m}$, the Ricci tensor is zero since flat space
has no curvature and the metric is the Euclidean metric. Eq.\ \ref{eq:dmu-Phi}
and Eq.\ \ref{eq:BBE} becomes 
\begin{equation}
d\mu_{\Phi}\equiv\frac{1}{Z}\mathrm{e}^{-\Phi(x)}dx_{1}dx_{2}\cdots dx_{m},\qquad\mathrm{Hess}(\Phi)\ge\frac{1}{c_{\mathrm{LSI}}}\mathbb{I},
\end{equation}
where the matrix element of the Hessian of $\Phi(x)$ is $[\mathrm{Hess}(\Phi)]_{ij}=\partial_{i}\partial_{j}\Phi(x)$.

Note that the condition Eq.\ (\ref{eq:BBE}) is a sufficient condition
for the LSI to hold. It is by no means necessary. 
\begin{lem}
\textup{\label{lem:LSI-Product}Given $n$-probability measure $\{\mu^{(i)}\}^{n}_{i=1}$
on $n$-Riemannian manifold $\{M_{i}\}^{n}_{i=1}$ with each of them
satisfying the LSI with constant $c^{(i)}_{\mathrm{LSI}}$, the product
measure $\mu^{(M)}=\mu^{(1)}\otimes\mu^{(2)}\otimes\cdots\otimes\mu^{(n)}$
on the manifold $M=M_{1}\times M_{2}\times\cdots\times M_{n}$ also
satisfy the LSI with the constant $c^{(M)}_{\mathrm{LSI}}=\max\{c^{(i)}_{\mathrm{LSI}}\}$.}
\end{lem}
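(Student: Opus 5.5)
The plan is to prove the standard tensorization (subadditivity) property of entropy, and then apply the LSI Eq.~(\ref{eq:LSI-def2}) coordinate by coordinate.

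\textbf{Notation.} For a probability measure $\nu$ and a non-negative integrable $g$, write $\mathrm{Ent}_{\nu}(g)\equiv\int g\log(g/\mathbb{E}_{\nu}g)\,d\nu$. In this notation, Eq.~(\ref{eq:LSI-def2}) reads $\mathrm{Ent}_{\nu}(f^{2})\leq c_{\mathrm{LSI}}\int\|\nabla f\|^{2}d\nu$. For $g$ defined on $M=M_{1}\times\cdots\times M_{n}$, let $\mathrm{Ent}_{i}(g)$ denote the entropy with respect to $\mu^{(i)}$ in the $i$-th variable only. This is a function of the remaining variables $x_{j\neq i}$.

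\textbf{Step 1: subadditivity.} The key inequality is
\begin{equation}
\mathrm{Ent}_{\mu^{(M)}}(g)\leq\sum_{i=1}^{n}\int\mathrm{Ent}_{i}(g)\,d\mu^{(M)}.
\end{equation}
I would prove it for $n=2$ and then induct, since $\mu^{(1)}\otimes\cdots\otimes\mu^{(n)}=(\mu^{(1)}\otimes\cdots\otimes\mu^{(n-1)})\otimes\mu^{(n)}$.

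For $n=2$, set $G(y)=\int g(x,y)\,d\mu^{(1)}(x)$. Splitting the logarithm as $\log(g/\mathbb{E}g)=\log(g/G)+\log(G/\mathbb{E}g)$ gives
\begin{equation}
\mathrm{Ent}_{\mu^{(1)}\otimes\mu^{(2)}}(g)=\int\mathrm{Ent}_{1}(g)(y)\,d\mu^{(2)}(y)+\mathrm{Ent}_{\mu^{(2)}}(G).
\end{equation}
It remains to bound $\mathrm{Ent}_{\mu^{(2)}}(G)\leq\int\mathrm{Ent}_{2}(g)(x)\,d\mu^{(1)}(x)$. For this I would use the variational (Donsker--Varadhan type) formula
\begin{equation}
\mathrm{Ent}_{\nu}(G)=\sup\Big\{\int Gh\,d\nu:\int e^{h}d\nu\leq1\Big\}.
\end{equation}
It follows from Young's/Jensen's inequality $uv\leq u\log u-u+e^{v}$ together with normalization. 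For any admissible $h(y)$, Fubini gives $\int Gh\,d\mu^{(2)}=\int\big[\int g(x,y)h(y)\,d\mu^{(2)}(y)\big]d\mu^{(1)}(x)$. The inner integral is at most $\mathrm{Ent}_{2}(g)(x)$ by the same variational formula applied at fixed $x$. Taking the supremum over $h$ gives the bound.

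\textbf{Step 2: apply each coordinate LSI.} Take $g=f^{2}$ with $f$ smooth and bounded, and recover the general case by the usual approximation. For fixed $x_{j\neq i}$, the LSI for $\mu^{(i)}$ gives $\mathrm{Ent}_{i}(f^{2})\leq c^{(i)}_{\mathrm{LSI}}\int\|\nabla_{i}f\|^{2}d\mu^{(i)}$, where $\nabla_{i}$ is the gradient along $M_{i}$. Integrating over the other coordinates and summing over $i$ yields
\begin{equation}
\mathrm{Ent}_{\mu^{(M)}}(f^{2})\leq\max_{i}c^{(i)}_{\mathrm{LSI}}\int\sum_{i}\|\nabla_{i}f\|^{2}d\mu^{(M)}.
\end{equation}
Finally, because $M$ carries the product Riemannian metric, $T_{x}M=\oplus_{i}T_{x_{i}}M_{i}$ is an orthogonal decomposition. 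Hence $\sum_{i}\|\nabla_{i}f\|^{2}=\|\nabla f\|^{2}$, which is exactly the LSI on $M$ with constant $\max_{i}c^{(i)}_{\mathrm{LSI}}$.

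\textbf{Expected obstacle.} The main step is the subadditivity inequality, and specifically the bound on $\mathrm{Ent}_{\mu^{(2)}}(G)$. This is where the product structure is used, through the convexity of entropy expressed by the variational formula. Everything after that is bookkeeping, apart from routine integrability and positivity technicalities: replace $g$ by $g+\epsilon$ and let $\epsilon\to0$.
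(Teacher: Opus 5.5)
Your proof is correct. The paper gives no proof of this lemma: it is quoted as background, with the details deferred to the monograph of Anderson, Guionnet and Zeitouni. Your argument is the standard tensorization proof of this fact: subadditivity of entropy under products via the duality formula, then the coordinatewise LSI, then the orthogonal splitting $\|\nabla f\|^{2}=\sum_{i}\|\nabla_{i}f\|^{2}$ of the product metric. So there is no competing route to compare.

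Two small remarks. First, in bounding $\mathrm{Ent}_{\mu^{(2)}}(G)$ you do not need a supremum over possibly unbounded $h$. After replacing $g$ by $g+\epsilon$ with $g$ bounded, $h^{*}=\log(G/\mathbb{E}_{\mu^{(2)}}G)$ is bounded and satisfies $\int e^{h^{*}}d\mu^{(2)}=1$. It also attains $\mathrm{Ent}_{\mu^{(2)}}(G)=\int Gh^{*}d\mu^{(2)}$, so Fubini applies directly and at fixed $x$ you only use the easy (Young) direction of the duality.

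Second, you took Eq.~(\ref{eq:LSI-def2}) as the definition. However, substituting $g=f^{2}$ into Eq.~(\ref{eq:LSI-def1}) gives $\mathrm{Ent}(f^{2})\leq2c_{\mathrm{LSI}}\int\|\nabla f\|^{2}d\mu$. That is the normalization consistent with the paper's Bakry--\'Emery criterion, Herbst bound and uniform-interval constant. Your argument is linear in the constant, so it goes through verbatim under either normalization. It also never uses curvature or the absence of boundary, only the orthogonal splitting of the gradient. Hence it covers the paper's actual application, the product of uniform measures on the disorder cube $[-h,h]^{m}$.
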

We consider the GUE, where 
\begin{equation}
d\mu(H)\propto\mathrm{e}^{-\frac{N\mathrm{Tr}H^{2}}{2}}\prod_{i}dH_{ii}\prod_{j\leq k}d\mathrm{Re}H_{jk}d\mathrm{Im}H_{jk}.
\end{equation}
The manifold is essentially $\mathbb{R}^{N^{2}}$, which does not
have any curvature. One can readily find that 
\begin{equation}
\mathrm{Hess}(\mathrm{e}^{-\frac{N\mathrm{Tr}H^{2}}{2}})=N\mathrm{diag}(\underbrace{1,\cdots1}_{\#=N},\underbrace{2,2\cdots2}_{\#=N(N-1)})\ge N\mathbb{I},
\end{equation}
and therefore
\begin{equation}
c^{(\mathrm{GUE})}_{\mathrm{LSI}}=\frac{1}{N}.
\end{equation}
Realistic disordered Hamiltonian ensemble is typically implemented
through 
\begin{equation}
H[x]=H_{0}+\sum_{A}h_{A}\sigma_{A},
\end{equation}
where without loss of generality we assume $\sigma_{A}$ is Hermitian,
traceless and its support $A$ only contains a finite number of sites
and $h_{A}\in[-h,h]$. For one-dimensional uniform distribution on
an interval $[a,\,b]$, it can be shown that\ \citep{ghang2013thesharp}
\begin{equation}
\frac{2\pi^{2}}{\ell^{2}}\int g(x)\log\frac{g(x)}{\mathbb{E}_{\mu}g(x)}d\mu(x)\leq\int|\frac{g^{\prime}(x)|^{2}}{g(x)}d\mu(x),\label{eq:LSI-uniform}
\end{equation}
where $\ell=b-a$. The idea of proving Eq.\ (\ref{eq:LSI-uniform})
is to exploit the LSI for the uniform Haar measure on a unit circle\ \citep{weissler1980logarithmic,emery1987asimple}
\begin{equation}
2\pi^{2}\int_{S^{1}}f^{2}(x)\frac{\log f^{2}(x)}{\mathbb{E}_{\mu}f^{2}(x)}d\mu_{\mathrm{Haar}}(x)\leq\int_{S^{1}}|f^{\prime}(x)|^{2}d\mu_{\mathrm{Haar}}(x),
\end{equation}
and map the integral on the unit circuit to the interval $[0,1]$.
It is then clear from that Eq.\ (\ref{eq:LSI-uniform}) and Lemma\ \ref{lem:LSI-Product}
that the constant for the uniform Hamiltonian ensemble is 
\begin{equation}
c^{(\mathrm{uni})}_{\mathrm{LSI}}=\frac{4h^{2}}{\pi^{2}}.
\end{equation}

A smooth, real-valued function $f:M\to\mathbb{R}$ is the called Lipschitz
function if $\exists$ a constant $L_{f}$ such that 
\begin{equation}
|f(x)-f(y)|\leq L_{f}d(x,y)\leq\mathrm{sup}_{x\in M}\|\nabla f\|d(x,y),
\end{equation}
where $d(x,y)$ is the length of the geodesic connecting $x$ and
$y$. Clearly, if 
\begin{equation}
\mathrm{sup}_{x\in M}\|\nabla f\|<\infty,
\end{equation}
then we know 
\begin{equation}
L_{f}\leq\mathrm{sup}_{x\in M}\|\nabla f\|,
\end{equation}
and $f$ must be Lipschitz. When $M$ is compact, $\mathrm{sup}_{x\in M}\|\nabla f\|<\infty$
must hold. Therefore, a smooth function on a compact manifold must
be Lipschitz. The concentration inequality upper bounds the fluctuations
in the following sense: if given a probability measure and a real-valued
function on a Riemannian manifold satisfies the LSI and Lipschitz
condition respectively, then we have the concentration inequality
\begin{equation}
\mu\left(|f(x)-\mathbb{E}_{\mu}f|\ge\varepsilon\right)\leq2\mathrm{e}^{-\frac{\varepsilon^{2}}{2c_{\mathrm{LSI}}L^{2}_{f}}}.
\end{equation}
Upon replacing $\varepsilon$ with $\varepsilon\mathbb{E}_{\mu}f$,
one obtain the relative concentration inequality:
\begin{equation}
\mu\left(\bigg|\frac{f(x)}{\mathbb{E}_{\mu}f}-1\bigg|\ge\varepsilon\right)\leq2\mathrm{e}^{-\frac{\varepsilon^{2}(\mathbb{E}_{\mu}f)^{2}}{2c_{\mathrm{LSI}}L^{2}_{f}}}.
\end{equation}

\subsection{Concentration inequality for the integral SFF and SFF}

For the sake of simplicity, we shall suppress the dependence on $T$
in $K_{H}(T)$ and $I_{H}(T)$ in the following discussion. We only
recover the $T$-dependence at the end of the discussion. First, let
us first compute the Lipschitz constants. 
\begin{align}
|\langle\nabla_{H}K_{H},\,X\rangle| & \equiv\bigg|\frac{dK_{H+\epsilon X}}{d\epsilon}\big|_{\epsilon=0}\bigg|=\frac{1}{N^{2}}\bigg|\frac{d|\mathrm{Tr}\mathrm{e}^{\mathrm{i}(H+\epsilon X)T}|^{2}}{d\epsilon}\bigg|=\frac{2T}{N^{2}}|\mathrm{Im}[\mathrm{Tr}(\mathrm{e}^{\mathrm{i}HT})\mathrm{Tr}(X\mathrm{e}^{-\mathrm{i}HT})]|\nonumber \\
 & \leq\frac{2T}{N^{2}}|\mathrm{Tr}(\mathrm{e}^{\mathrm{i}HT})\mathrm{Tr}(X\mathrm{e}^{-\mathrm{i}HT})|\leq\frac{2T}{\sqrt{N}}\|X\|_{\mathrm{HS}},\:\forall X,
\end{align}
where we have used the Cauchy-Schwarz inequality $|\mathrm{Tr}(X^{\dagger}Y)|\leq\|X\|_{\mathrm{HS}}\|Y\|_{\mathrm{HS}}$. 

On the other hand, we note that $|\langle\nabla_{H}K_{H},\,X\rangle|\leq\|\nabla_{H}K_{H}\|_{\mathrm{HS}}\|X\|_{\mathrm{HS}},\:\forall X$.
Therefore, when the Hamiltonian is taken from the GUE, we obtain
\begin{equation}
L^{(\mathrm{GUE})}_{K}\leq\|\nabla_{H}K_{H}\|_{\mathrm{HS}}\leq\frac{2T}{\sqrt{N}}.
\end{equation}
When the Hamiltonian is taken from the uniform ensemble, according
to the chain rule, we find
\begin{equation}
\partial_{A}K_{\bm{h}}=\langle\nabla_{H}K_{H},\,\sigma_{A}\rangle,
\end{equation}
where $\bm{h}\equiv\{h_{A}\}_{A}$ and $\partial_{A}\equiv\partial_{h_{A}}.$
Thus 
\begin{equation}
|\sum_{A}y_{A}\partial_{A}K_{\bm{h}}|=|\langle\nabla_{H}K_{H},\,\sum_{A}y{}_{A}\sigma_{A}\rangle|\leq\|\nabla_{H}K_{H}\|_{\mathrm{HS}}\|\sum_{A}y_{A}\sigma_{A}\|_{\mathrm{HS}}.
\end{equation}
By definition 
\begin{equation}
\|\sum_{A}y_{A}\sigma_{A}\|^{2}_{\mathrm{HS}}=\sum_{A,B}y_{A}\mathrm{Tr}(\sigma_{A}\sigma_{B})y_{B}\leq\lambda_{\max}(\Xi)\|\bm{y}\|^{2},
\end{equation}
where the matrix element of the matrix $\Xi$
\begin{equation}
\Xi_{AB}=\mathrm{Tr}(\sigma_{A}\sigma_{B}).
\end{equation}
Note that if $\{\sigma_{A}\}$ are single-site Pauli string operators,
the it clear that $\lambda_{\max}(\Xi)=2$. For general local disorder
coupling, where $A$ only supports on a finite number of lattices,
we know $\lambda_{\max}(\Xi)\sim O(1)$. 

Upon noting $|\sum_{A}y_{A}\partial_{A}K_{\bm{h}}|\leq\|\bm{y}\|\|\nabla K_{\bm{h}}\|$
for all $\bm{y}$, we find 
\begin{equation}
\|\nabla K_{\bm{h}}\|\leq\|\nabla_{H}K_{H}\|_{\mathrm{HS}}\sqrt{\lambda_{\max}(\Xi)}\leq\frac{2T\sqrt{\lambda_{\max}(\Xi)}}{\sqrt{N}}.
\end{equation}
Then for the integral of SFF, it is clear that 
\begin{equation}
\|\nabla_{H}I_{H}\|=\|\int^{1}_{0}(1-u)\nabla K_{H}(uT)du\|\leq\int^{1}_{0}(1-u)\|\nabla K_{H}(uT)\|du=\frac{T}{\sqrt{N}},
\end{equation}
\begin{equation}
\|\nabla_{\bm{h}}I_{H}\|\leq\int^{1}_{0}(1-u)\|\nabla K_{\bm{h}}\|du=\frac{T\sqrt{\lambda_{\max}(\Xi)}}{\sqrt{N}}.
\end{equation}
Therefore, when the Hamiltonian is taken from GUE, the concentration
inequalities for the SFF and the integral SFF become 
\begin{align}
\mu\left(\bigg|\frac{K_{H}(T)}{K(T)}-1\bigg|\ge\varepsilon\right) & \leq2\mathrm{e}^{-\frac{\varepsilon^{2}N^{2}K^{2}(T)}{8T^{2}}},\\
\mu\left(\bigg|\frac{I_{H}(T)}{I(T)}-1\bigg|\ge\varepsilon\right) & \leq2\mathrm{e}^{-\frac{\varepsilon^{2}N^{2}I^{2}(T)}{2T^{2}}},
\end{align}
and for the uniform Hamiltonian ensemble, the concentration inequalities
becomes 
\begin{align}
\mu\left(\bigg|\frac{K_{\bm{h}}(T)}{K(T)}-1\bigg|\ge\varepsilon\right) & \leq2\mathrm{e}^{-\frac{\pi^{2}\varepsilon^{2}NK^{2}(T)}{32h^{2}T^{2}\lambda_{\max}(\Sigma)}},\\
\mu\left(\bigg|\frac{I_{\bm{h}}(T)}{I(T)}-1\bigg|\ge\varepsilon\right) & \leq2\mathrm{e}^{-\frac{\pi^{2}\varepsilon^{2}NI^{2}(T)}{8h^{2}T^{2}\lambda_{\max}(\Sigma)}},
\end{align}
where we recovered the time-dependence in the SFF and integral SFF. 

Apparently, the concentration inequality implies that the SFF and
integral SFF are typical at early time scales. At the late-time scale
with $T\gg N$, $K(T),\,I(T)\sim1/N$, therefore they becomes trivial.
More generally, for GUE, the SFF and its integral are typical respectively
at time $T$ if $\exists\delta>0$, such that 
\begin{equation}
\frac{K(T)}{T}\ge\frac{1}{N^{1-\delta}},\quad\frac{I(T)}{T}\ge\frac{1}{N^{1-\delta}},
\end{equation}
while for the uniformly disordered Hamiltonian ensemble, the SFF and
its integral are typical respectively at time $T$ if $\exists\delta>0$,
such that 
\begin{equation}
\frac{K(T)}{T}\ge\frac{1}{N^{1/2-\delta}},\quad\frac{I(T)}{T}\ge\frac{1}{N^{1/2-\delta}}.
\end{equation}

\section{\label{sec:Numerical-Calculation}Details on the numerical calculations}

In this section, we provide a detailed description of the numerical
calculations for Fig.\ \ref{fig:Numerics}, the relative variance
of the QFI, and the saturation values. System parameters are described
in the main text. 
\begin{itemize}
\item Mean level spacing distribution, $P(s)$, and mean adjacent gap ratio,
${\displaystyle \langle\tilde{r}}\rangle$ shown in Fig.\ \ref{fig:Numerics}(a).
\\
To study the mean level spacing distribution, we first need to perform
an unfolding procedure to render the local density of states to be
uniform. This allows us to compare spectral statistics of physical
systems with the universal predictions of RMT. The unfolding procedure
is as follows\ \citep{haake2001quantum}: we discard the lowest and
highest $5\%$ of the energy levels from the ordered spectrum, which
contains $N$ energy levels, and perform our analysis in the bulk
of the spectrum. After discarding the edges, we divide the remaining
spectrum into consecutive intervals labeled by $\ell$, each containing
$M=10$ nearest-neighbor energy-level spacings. We then compute the
mean level spacing $\Delta^{(\ell)}$ within each interval, ${\displaystyle {\textstyle \Delta^{(\ell)}=\frac{1}{M-1}\sum^{M-1}_{i=1}\delta^{(\ell)}_{i+1,i}}}$
with ${\textstyle \delta^{(\ell)}_{i+1,i}=\lambda^{(\ell)}_{i+1}-\lambda^{(\ell)}_{i}}.$
Next we normalize the spacings within each interval by the corresponding
mean level spacing ${\displaystyle {\textstyle s_{i}=\frac{\delta^{(\ell)}_{i+1,i}}{\Delta^{(\ell)}}}}$,
and finally compute the mean level spacing distribution, $P(s)$.
On the other hand, the mean adjacent gap ratio provides an alternative
to diagnose whether the system is in the chaotic regime without requiring
the unfolding procedure\ \citep{atas2013distribution}. We simply
compute ${\displaystyle {\textstyle \tilde{r_{i}}=\mbox{Min}{\displaystyle \left(r_{i},r^{-1}_{i}\right)}}}$with
${\textstyle r_{i}=\frac{\lambda_{i+2}-\lambda_{i+1}}{\lambda_{i+1}-\lambda_{i}}}$
and then obtain the mean adjacent gap ratio as ${\textstyle {\displaystyle \langle\tilde{r}}\rangle={\textstyle {\textstyle {\textstyle {\displaystyle {\textstyle \frac{1}{N-2}}}}{\textstyle \sum}}^{N-2}_{i=1}\tilde{r_{i}}}}$.
Therefore, our physical systems are in the chaotic regime for the
set of chosen parameters. For disordered FRF-XXX model, an average
over $50$ realizations was considered while a single realization
is considered for the clean MFI model . The system size is $N=2^{12}$
for both models. 
\item Numerical calculation of the smoothed SFF $K(T)$ shown in Fig.\ \ref{fig:Numerics}(b-d).
\\
For the analysis of the smoothed SFF we first diagonalize the Hamiltonian
matrix to obtain its energy spectrum and then compute ${\textstyle {\textstyle K(T)={\displaystyle {\textstyle \frac{1}{N^{2}}}{\displaystyle {\displaystyle |}{\textstyle \sum_{j}\text{e}^{-\text{i}\lambda_{j}T}}{\displaystyle |}}^{2}}}}$.
In addition, because its time evolution exhibits large fluctuations,
an average is required to reveal the ramp, a characteristic feature
of chaotic systems. For Figs.\ \ref{fig:Numerics}(b) and\ \ref{fig:Numerics}(c)
an average over $500$ realizations of the Hamiltonians was performed
in order to detect the ramp. On the other hand, for Fig.\ \ref{fig:Numerics}(d),
the SFF was computed for a single realization, and due to its large
fluctuations, the ramp is difficult to detect. The system size is
$N=2^{11}$ in all three panels (b-d) in Fig.\ \ref{fig:Numerics}.
\item Analytic results for the average QFI $T^{-2}\langle\!\langle\bar{Q}\rangle\!\rangle$
shown in Figs.\ \ref{fig:Numerics}(b-d). \\
Based on our analytical expressions Eq.\ (\ref{eq:Q-ctn}), Eq.(\ref{eq:Q-ctn-phy-H}),
Eq.(\ref{eq:Q-ctn-GOE}), and Eq.(\ref{eq:Q-ctn-phys-GOE}), we first
perform the time integration of the smoothed SFF, then compute the
miscellaneous constants $C_{\alpha\beta}$, $A_{\alpha\beta}$, and
$B_{\alpha\beta}$ mentioned in the main text. and finally evaluate
Eq.\ (\ref{eq:Q-ctn}). The additional labels ``deterministic''
and ``randomized'' in Figs.\ \ref{fig:Numerics}(c) and\ \ref{fig:Numerics}(d)
refer to the choice of the operator $P$ appearing in the sensing
Hamiltonian. In the deterministic case $P=\sigma_{x}\otimes\mathbb{I}^{\otimes(L-1)}$
, whereas in the randomized case $P$ is replaced by $V^{\dagger}PV$,
where $V$ is a matrix of Haar-random vectors stored column-wise.
In addition, for the deterministic protocol we compute the coefficients
$A_{\alpha\beta}$, and $B_{\alpha\beta}$. by averaging over $500$
realizations of Hamiltonians.
\item Numerical calculation for the average QFI, $T^{-2}\langle\!\langle\bar{Q}\rangle\!\rangle$
shown in Figs.\ \ref{fig:Numerics}(b-d). \\
For each realization of the initial state, Hamiltonian, and coupling
operator, we compute $T^{-2}Q^{\mathrm{ctn}}_{\alpha\beta}=4T^{-2}\mathrm{Var}\left(G^{\mathrm{ctn}}_{\alpha\beta}\right){}_{\ket{\psi_{0}}}$
as follows: \\

\begin{enumerate}
\item As the initial state $\ket{\psi_{0}}$ is Haar-random,\textcolor{blue}{{}
}we draw its components independently from a standard normal distribution
and then normalize the resulting state, note this is the case of real
Haar-random states, for the complex case the components must instead
be drawn from a complex Gaussian distribution. 
\item Decomposition into the eigenbasis of $H_{\theta}$. It allows us to
perform the time integral analytically, so that the matrix elements
of the generator can be written as 
\begin{equation}
G^{\text{ctn}}_{\alpha\beta}=\Phi_{\alpha\beta}(T)P_{\alpha\beta}
\end{equation}
 where 
\begin{equation}
P_{\alpha\beta}\equiv\langle\lambda_{\alpha}|P|\lambda_{\beta}\rangle
\end{equation}
 are the matrix elements of $P={\displaystyle \sigma_{x}\otimes\mathbb{I}^{\otimes(L-1)}}$
in the eigenbasis of $H_{\theta}$ and 
\begin{equation}
\Phi_{\alpha\beta}(T)\equiv{\displaystyle \int}^{T}_{0}{\displaystyle ds\text{e}^{\text{i}(\lambda_{\alpha}-\lambda_{\beta})s}}
\end{equation}
\item Finally, we express $\ket{\psi_{0}}$ in the eigenbasis of $H_{\theta}$,
compute the variance of the generator, and obtain $T^{-2}Q^{\mathrm{ctn}}$
at time $T$ and average $T^{-2}Q^{\mathrm{ctn}}$ over different
realizations of initial states, Hamiltonians and possibly coupling
operators (in the randomized protocol) to obtain $T^{-2}\langle\!\langle\bar{Q}\rangle\!\rangle$
or $T^{-2}\langle\!\langle\tilde{\bar{Q}}\rangle\!\rangle$ . For
Figs.\ \ref{fig:Numerics}(b)\textcolor{blue}{{} }and\ \ref{fig:Numerics}(c),
we averaged over $500$ simultaneous realizations of initial states,
Hamiltonians and coupling operators (for the randomized protocol).
For Fig.\ \ref{fig:Numerics}(d), we averaged over $500$ simultaneous
realizations of initial states and coupling operators (for the randomized
protocol) without averaging over the Hamiltonians. The system size
is $N=2^{11}$ in all three panels (b-d) in Fig.\ \ref{fig:Numerics}.
In the deterministic case $P=\sigma_{x}\otimes\mathbb{I}^{\otimes(L-1)}$
, whereas in the randomized case $P$ is replaced by $V^{\dagger}PV$,
where $V$ is a matrix of Haar-random vectors stored column-wise. 
\end{enumerate}
\item Relative variance (RV) of the QFI show in Figs.\ \ref{fig:Numerics}(b)
and\ \ref{fig:Numerics}(c).\\
Using the numerical procedure above to compute different realizations
of $T^{-2}Q^{\mathrm{ctn}}_{\alpha\beta}$, we can compute the relative
variance at each time step. If the relative variance of a quantity
$O$ decreases with increasing system size, then the quantity $O$
is self-averaging. This property is particularly important for experiments
and numerical calculations, as it allows the number of required samples
to be reduced as the system size increases. In case of GOE and GUE,
the relative variance clearly decreases with increasing system size
(see inset in Fig.\ \ref{fig:Numerics}(b)), indicating that the
QFI is self-averaging at all timescales. For the disordered FRF-XXX
model, the relative variance of the randomized protocol exhibits behavior
similar to that observed in GUE, which decreases appreciably as the
system size increases, see inset in Fig\ \ref{fig:Numerics}(b).
On the other hand, for the deterministic protocol, the relative variance
shown in the inset of Fig.\ \ref{fig:Numerics}(c) is small.
\item Saturation values of the average QFI and its relative variance.\\
We also numerically calculate the saturation values of the average
QFI and the relative variance at long times ($T\sim10^{4}\sim10N$)
in Fig.\ \ref{fig:Numerics-1-2}. For GUE, GOE or the randomized
protocol in the FRF-XXX model, the saturation values of both the QFI
and the relative variance scale approximately as $N^{-1}$, which
agrees with very well the analytical prediction in the main text,
see Figs.\ \ref{fig:Numerics-1-2}(a,b,d) and Fig.\ \ref{fig:Numerics-1-2}(e,f,h).
In contrast, for the deterministic protocol, the saturation value
of the QFI scales approximately as $N^{-0.5}$, while that of the
relative variance scales approximately as $N^{0.15}$, see Figs.\ \ref{fig:Numerics-1-2}(c)
and Fig.\ \ref{fig:Numerics-1-2}(g). \\
It is worth to note that for the deterministic protocol, the slow
increase of the relative variance with the system size. As we have
mentioned in the main text, the deterministic protocol replies on
the assumption of statistical independence between the eigenvalues
and eigenvectors of the Hamiltonian. To what extend this assumption
holds is not clear in the literature. Here, if the approximation rigorously
holds, according to our proof of the late-time self-averaging property,
we expect the decrease of RV as the system size increases. Therefore,
it is plausible that the assumption of statistical independence may
not be a good approximation when computing RV. However, confirming
this conjecture requires computational resources beyond current capabilities. 
\end{itemize}
\begin{figure}
\centering
\includegraphics[scale=0.3]{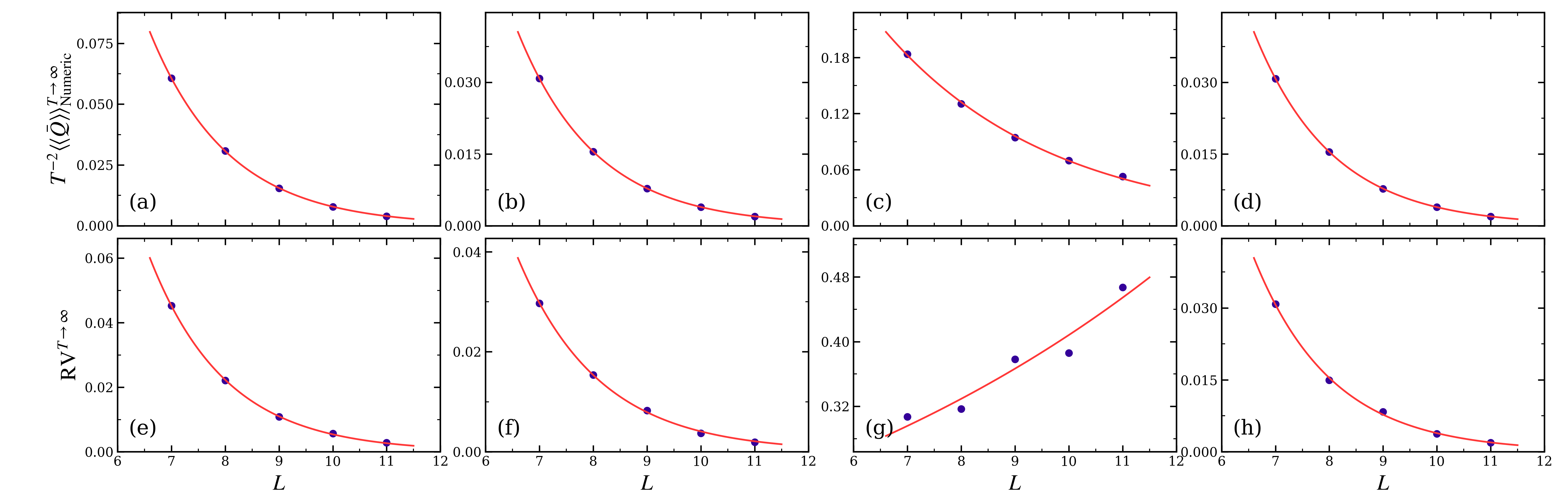}\caption{\label{fig:Numerics-1-2}\label{fig:Numerics-1}Saturation values
of the QFI and its relative variance for (a, e) GOE, (b, f) GUE, (c,
g) FRF-XXX model for the deterministic protocol, and (d, h) for the
randomized protocol. The blue dots correspond to the numerical results,
while the red curves represent fits of the form ${\displaystyle \propto N^{\nu}}$
with $N=2^{L}$. Value of parameters: $\theta=10^{-4}$; $500$ realizations
for $L=11$; $1000$ realizations for $L=10$; $1500$ realizations
for $L=9$; $3000$ realizations for $L=8$; and $5000$ realizations
for $L=7$.}
\end{figure}

\end{document}